\documentclass[letterpaper, 10 pt, conference]{ieeeconf}  

\IEEEoverridecommandlockouts                              

\makeatletter
\newcommand{\removelatexerror}{\let\@latex@error\@gobble}
\makeatother
\usepackage[T1]{fontenc}
\usepackage[utf8]{inputenc}
\usepackage{cite}
\usepackage{xcolor}
\usepackage{dsfont}
\usepackage[super]{nth}

\usepackage{amsmath,amssymb,amsfonts,amsthm}
\usepackage{graphicx}
\usepackage{epstopdf}
\usepackage[font=scriptsize]{caption}
\usepackage{subcaption}
\usepackage{epsfig}
\usepackage{cite}
\let\labelindent\relax
\usepackage{enumitem}

\usepackage{amsmath,amssymb,amsfonts}
\usepackage{color,amsthm} 

\newtheorem{remark}{Remark}
\newtheorem{theorem}{Theorem}

\newtheorem{corollary}{Corollary}
\newtheorem{proposition}{Proposition}
\newtheorem{definition}{Definition}
\newtheorem{problem}{Problem}

\newtheorem{assumption}{Assumption}

\allowdisplaybreaks

\title{\LARGE \bf
Encirclement Guaranteed Finite-Time Capture \\ against Unknown Evader Strategies
}

\author{Dinesh Patra$^{1}$, Prajakta Surve$^{2}$, Ashish R. Hota$^{1}$, Shaunak D. Bopardikar$^{2}$%
\thanks{$^{1}$D. Patra and A. R. Hota are with the Department of Electrical Engineering, IIT Kharagpur, India 
(e-mail: dinesh.patra912@kgpian.iitkgp.ac.in; ahota@ee.iitkgp.ac.in). Dinesh Patra was supported by the IEEE CSS Graduate Collaboration Fellowship, 2024 for this work.}%
\thanks{$^{2}$P. Surve and S. D. Bopardikar are with the Department of Electrical and Computer Engineering, Michigan State University, USA 
(e-mail: survepra@msu.edu; shaunak@egr.msu.edu). Their work is supported in part by the Aerospace Systems Technology Research and Assessment (ASTRA) Aerospace Technology Development and Testing (ATDT) program at AFRL under contract number FA8650-21-D-2602 as well as the AFOSR LRIR \# 24RQCOR002. DISTRIBUTION STATEMENT  A. Distribution is unlimited. AFRL-2026-1226, Cleared 03-13-2026.}}

\begin{document}
\maketitle
\thispagestyle{empty}
\pagestyle{empty}
\begin{abstract}
We consider a pursuit-evasion scenario involving a group of pursuers and a single evader in a two-dimensional unbounded environment. The pursuers aim to capture the evader in finite time while ensuring that the evader remains enclosed within the convex hull of their positions until capture, without knowledge of the evader's heading angle. Prior works have addressed the problem of encirclement and capture separately in different contexts. In this letter, we present a class of strategies for the pursuers that guarantee capture in finite time while maintaining encirclement, irrespective of the evader's strategy. Furthermore, we derive an upper bound on the time to capture. Numerical results highlight the effectiveness of the proposed framework against a range of evader strategies.
\end{abstract}

\section{Introduction} \label{section:introduction}

Consider a situation where an unknown intruder enters a secure confinement zone formed by a team of ground robots deployed for surveillance. Once the intruder is inside the zone, the robots aim to keep it confined and approach it progressively to assess its intent; moreover, timely neutralization may be required if a threat is detected. This motivates a pursuit-evasion problem in which a team of pursuers must (i) \,\emph{maintain encirclement}, that is, confine the evader inside the convex hull of their positions at all times, and (ii) \,\emph{guarantee capture in finite time}, irrespective of and without knowledge of the strategy deployed by the evader. 

Encirclement-guaranteed pursuit problems have emerged as an important research direction in recent years. Early studies on static and dynamic target encirclement \cite{ref_early_encircle_lion_jankovic1978man,ref_early_kopparty2005framework,ref_early_encircle_marasco2012model,ref_early_bopardikar2014k} focused on surveillance and defense scenarios, where the objective is to restrict the opponent's motion and approach it progressively, thereby minimizing escalation rather than attempting immediate capture. When the evader is faster \cite{ref_encirclement_faster_chen2016multi,ref_encirclement_faster_garcia2021cooperative} or has equal speed \cite{ref_early_bopardikar2014k}, encirclement often becomes a necessary ingredient for capture. However, when the evader is slower, encirclement and capture are typically treated as distinct objectives \cite{ref_slow_encirclement_wang2023distributed,ref_RMPC_patra20253d}.

A number of recent approaches leverage geometric partitions (e.g., Voronoi-based and area minimization ideas) to reason about encirclement and/or capture in different contexts \cite{ref_Vornoi_area_huang2011guaranteed,ref_Vornoi_area_zhou2016cooperative,ref_slow_encirclement_wang2023distributed,ref_ratnoo_kumar2025cooperative}. For instance, \cite{ref_ratnoo_kumar2025cooperative} derives a worst-case upper bound on the capture time under a specific evader policy, where the evader moves toward the Chebyshev center of its Voronoi cell but relies on a centralized server to collect the position and velocity inputs of all the players.

A related stream of work \cite{Lama2024, Pierson2017} has examined the herding problem, where herders steer non-adversarial agents to a specified goal region using predictable repulsion-based responses. In contrast, our goal is to determine strategies for the pursuers to capture the evader rather than steer it, and no physical repulsion is exerted on the evader in this work.

In a complementary direction, partitioning-based robust model predictive control (MPC) formulations treat the evader input as an uncertain disturbance and aim to enforce encirclement throughout the pursuit \cite{ref_RMPC_patra20253d,ref_slow_RMPC_encirclement_wang2025distributed}. However, these methods typically require online optimization and do not provide rigorous capture guarantees. Game-theoretic approaches \cite{cd5_deng2020multi,ref_encirclement_faster_garcia2021cooperative,cd1,cd2,ref_pd_3d_garcia2022optimal} offer principled strategies when the players' objectives are known. However, they are not directly applicable when the evader's objective (and hence its policy) is not known to the pursuers.

Thus, while prior works provide guarantees on encirclement or capture under particular assumptions in isolation, we are not aware of any existing work that provides simultaneous guarantees of (i) encirclement against unknown evader strategies and (ii) finite-time capture with an explicit analytic upper bound. In this work, we address this research gap with a simple decentralized strategy and a Lyapunov-based analysis. Our main contributions are summarized below.
\begin{enumerate}
    \item We first associate the idea of encirclement of the evader in the convex hull of pursuers with triangulations, where the areas of the triangles formed by the evader and the pursuers along the edges of the convex hull serve as a key quantity of interest. 
    \item We propose a class of strategies for the pursuers that are necessary and sufficient to guarantee encirclement. We then derive conditions on the agents' speeds and decentralized control inputs that are sufficient to guarantee encirclement of the evader, irrespective of its input.
    \item We show that a subset of the encirclement-guaranteeing strategies is, in fact, sufficient to ensure finite-time capture when the evader is strictly slower than the pursuers, and we derive an explicit worst-case upper bound on the capture time using Lyapunov analysis.\footnote{In contrast, the capture guarantees in \cite{ref_slow_RMPC_encirclement_wang2025distributed} are empirical in nature.}
\end{enumerate}
Finally, we validate the proposed strategy against several representative evader policies via numerical simulations. 

\section{Problem Statement}\label{section:problem_description}

We consider a pursuit-evasion scenario in the unbounded planar domain $\mathbb{R}^2$, consisting of $n$ pursuers and one evader. For $i \in [n]:=\{1,\ldots,n\}$, the state of the pursuer $i$ is its instantaneous position denoted by $p_i=(x_i,y_i)$. Similarly, the state of the evader is its instantaneous position denoted by $e=(x_e,y_e)$. Motivated by models studied in beyond-visual-range scenarios \cite{ref_model_garcia2021beyond}, each pursuer is assumed to move at its maximum speed, denoted by $v_i$, and its control input is given by its heading angle, denoted by $\theta_i$. 
Meanwhile, the evader's control input constitutes its speed, denoted by $\mu$, and its heading angle, denoted by $\psi$. In particular, the states of pursuer $i$ and the evader evolve as
\begin{align}
    &\ \dot{x}_i = v_i \cos \theta_i,  \; 
\dot{y}_i = v_i \sin \theta_i, \label{eq:P_dyn}
\\ &\		\dot{x}_e =\mu \cos \psi, \;
		\dot{y}_e =\mu \sin \psi, \label{eq:E_dyn}
\end{align}
where $\theta_i, \psi \in [0,2\pi)$, $\mu \in [0,\mu_m]$, $v_i > 0$, and $\mu_m \geq 0$.

We now present some key definitions and assumptions relevant to our analysis. The convex hull formed by $n$ pursuers in $\mathbb{R}^2$ is defined as
\begin{align}
& \mathcal{H}(p_1,p_2,\ldots,p_n) \nonumber
\\ & \qquad :=\big\{p \in \mathbb{R}^2 \mid p=\sum_{i=1}^{n} \lambda_i p_i, \lambda_i \geq 0, \sum_{i=1}^{n} \lambda_i=1\big\}. \label{eqn_convhull}
\end{align}

\begin{definition}[Encirclement Condition]
The evader is said to be encircled by the pursuers at time $t$ if it lies in the convex hull formed by the pursuers, i.e., $e(t) \in \mathcal{H}(p_1(t), \dots, p_n(t)) $. 
\end{definition}

For better readability, we denote $\mathcal{H}(p_1(t),\dots, p_n(t))$ by $\mathcal{H}$, whenever arguments are clear from the context.

\begin{definition}[Finite-Time Capture Condition] 
The evader is said to be captured in finite time if there exists at least one pursuer whose distance from the evader equals the capture radius\footnote{All pursuers are assumed to have the same capture radius $r_c > 0$.} $r_c$ at some finite time denoted by $t_c$, i.e., $d_{\min}(t)=\min_{i \in [n]} \,\|\, p_i(t) - e(t) \,\|_2 = r_c$ for some $t = t_c < \infty$.
\end{definition}

The notion of finite-time capture is meaningful because we assume that the interaction terminates when the evader is captured for the first time, possibly due to the evader being physically destroyed. Without this assumption, while the pursuer which first captures the evader can guarantee sustained capture at all future times (e.g., via pure pursuit with equal speed as the evader), analyzing
whether the evader remains encircled for $t > t_c$ remains an open problem.

\begin{definition}[Redundant Pursuers]
Pursuer $i$, where $i \in [n]$, is said to be redundant if the convex hull formed by all $n$ pursuers is identical to the convex hull formed by the remaining $n-1$ pursuers after removing its position $p_i(t)$.
\end{definition}

Since the presence of redundant pursuers does not influence our theoretical analysis, the following assumption is imposed to exclude them from further consideration.

\begin{assumption} \label{ass:redudant}
All $n$ pursuers are non-redundant. 
\end{assumption}

The notion of encirclement within a convex hull in the $\mathbb{R}^2$ space is meaningful only when there are at least three pursuers. Thus, we assume $n\geq3$.
The pursuit-evasion interaction terminates with a win for the pursuers if the finite-time capture condition holds and the encirclement condition holds at all times $t \leq t_c$. 
Conversely, the evader wins if it violates either of them. We now state the problem we investigate in this letter.
\begin{problem}\label{problem:stat}
Given an initial configuration of $n \geq 3$ non-redundant pursuers and one evader satisfying the encirclement condition at $t=0$, the goal is to design decentralized pursuer strategies that maintain encirclement at all times, \emph{and} guarantee capture in finite time, irrespective of and without knowledge of the speed and heading angle chosen by the evader.   
\end{problem}

The assumption that the pursuers are initialized in an encirclement configuration is standard in the existing literature addressing unknown evader policies with rigorous guarantees on either encirclement or capture \cite{ref_ratnoo_kumar2025cooperative,ref_slow_RMPC_encirclement_wang2025distributed,ref_early_bopardikar2014k,ref_RMPC_patra20253d}. The fundamental challenge lies in the fact that the instantaneous speed and heading angle of the evader are unknown to all the pursuers. However, we impose the following assumption on the maximum speed of the evader.

\begin{assumption}\label{assumption:evader_bound}
The maximum speed of the evader, denoted by $\mu_m$, is assumed to be known to the pursuers.
\end{assumption}

\noindent Note that the evader need not always move with speed $\mu_m$. Our analysis begins with an area-based approach to encirclement, which is presented in the next section.

\section{An Area-based approach to Encirclement} \label{section:area}

Consider $n$ pursuers and an evader such that the evader's initial position lies inside the convex hull formed by the pursuers' initial positions. The convex hull $\mathcal{H}$ can be triangulated by connecting the evader to consecutive pursuers, forming triangles with vertices $e$, $p_j$, and $p_k$ denoted by $\triangle ep_jp_k$, where $j,k \in [n]$ are consecutive integers modulo $n$.
The area of $\triangle ep_jp_k$ is denoted compactly by $A_{jk} := A(\triangle ep_jp_k)$. Figure \ref{fig:triangulation} illustrates two examples of triangulation with three and eight pursuers. We adopt the following conventions to ensure a consistent treatment of the vertices and areas of the triangles used throughout the letter.
\begin{enumerate}
    \item We fix a counterclockwise ordering for the vertices of each sub-triangle $\triangle ep_jp_k$
    starting from the evader's position when $e \in \operatorname{int}(\mathcal{H})$. For example, $\triangle ep_jp_k$, where $j,k \in [n]$ are consecutive integers modulo $n$, indicates that the vertices of the triangle are ordered from $e$ to $p_j$ to $p_k$ in a counterclockwise direction while $e \in \operatorname{int}(\mathcal{H})$. 
    This ordering is retained even if the evader exits $\mathcal{H}$, thus ensuring that $A_{jk}>0$ when $e \in \operatorname{int}(\mathcal{H})$, $A_{jk}=0$ when it resides on an edge and $A_{jk}<0$ when it exits $\mathcal{H}$ reflecting change in orientation to clockwise. This convention allows the area to serve as a quantitative indicator of containment within the convex hull.
    \item  By slight abuse of notation, $p_j$ denotes both the pursuer $j$ and its position depending on the context. 
\end{enumerate} 

\begin{figure}[t!]    
      \centering
      \begin{subfigure}{0.23\textwidth}
       \centering 
         \includegraphics[width=\textwidth]{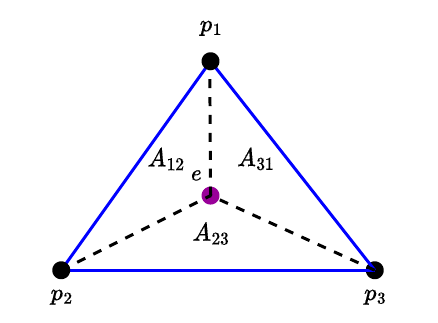}   
         \label{fig:EGPY}
      \end{subfigure}
     \begin{subfigure}{0.23\textwidth}
       \centering 
     \includegraphics[width=\textwidth]{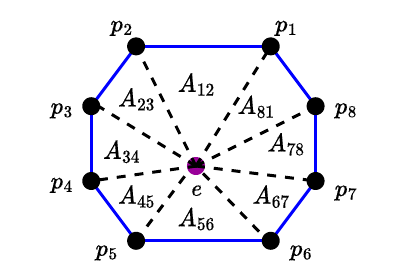}
      \label{fig:EGPN}
      \end{subfigure}
       \centering
      \caption{Examples of triangulation with an evader lying inside the convex hull of $3$ pursuers (left) and $8$ pursuers (right).}
      \label{fig:triangulation}
   \end{figure} 

The area $A_{jk}$ formed by the vertices $e=(x_e,y_e)$, $p_j=(x_j,y_j)$, and $p_k=(x_k,y_k)$ taken in appropriate order as per the above convention, can be computed as
\begin{equation} \label{eq:A_jk}
A_{jk} = \tfrac{1}{2}\bigl[x_e(y_j-y_k)+x_j(y_k-y_e)+x_k(y_e-y_j)\bigr].    
\end{equation}
The area $A_{jk}$ is a function of the instantaneous position of the agents, namely $e(t)$, $p_j(t)$, and $p_k(t)$, which evolve over time. 
We compute the expression for its time derivative, which can be decomposed into pursuer and evader terms as
\begin{align}
2\dot{A}_{jk} & =T_p+T_e, \quad \text{where} \label{eq:Ajkdot_decompose}
\\ T_p & =v_j \cos \theta_j (y_k-y_e) + v_k \cos \theta_k (y_e-y_j) \nonumber
\\ & \quad + v_j \sin \theta_j (x_e-x_k) + v_k \sin \theta_k (x_j-x_e), \label{eq:Tp}
\\ & =d_{ek} v_j \cos(\theta_j-\beta)+d_{ej} v_k \cos(\theta_k-\gamma), \label{eq:Tp_compact}
 \\ T_e & =\mu \cos\psi (y_j-y_k) + \mu \sin\psi (x_k-x_j). \label{eq:Te}
 \\ &= -\mu d_{jk} \sin(\alpha-\psi), \quad \text{where} \label{eq:Te_compact}
\\ \quad d_{ej} & = \|e-p_j\|, \, d_{ek}=\|e-p_k\|, \, \label{eq:dej_ek_jk}
\\
\beta &=\tan^{-1}\!\left(\frac{x_e - x_k}{\,y_k - y_e}\right), \, \, \gamma =\tan^{-1}\!\left(\frac{x_j - x_e}{\,y_e - y_j}\right), \nonumber 
\\ d_{jk} & =\|p_j-p_k\|, \quad \alpha =\tan^{-1}\!\left(\frac{y_k-y_j}{\, x_k-x_j}\right). \label{eq:beta_gamma_delta}
\end{align}
We now define the notion of an active edge and active pursuers.   

\begin{definition} [Active Edge and Active Pursuers]
An edge $p_jp_k$, where $j,k \in [n]$ are consecutive integers modulo $n$, is said to be \emph{active} at time $t$ if there exists $\lambda \in [0,1]$ such that $e(t)=\lambda p_j(t)+ (1-\lambda) p_k(t)$. The corresponding pursuers $j$ and $k$ are called \emph{active pursuers} at time $t$.
\end{definition}

In other words, if the evader lies on the line joining the positions $p_j(t)$ and $p_k(t)$, then the edge $p_jp_k$ is said to be active. The endpoints $p_j$ and $p_k$ are named as per the convention of counterclockwise vertex ordering starting from the evader's position while it was in the interior of the convex hull.\footnote{Note that, $p_j$ (or $p_k$) does not refer to (the position of) a specific pursuer throughout pursuit. According to conventions adopted for the area and the vertices, whenever an edge is active, $p_j$ and $p_k$ denote the pursuers to the right and left of the evader, respectively, defined with respect to the evader’s outward heading from the convex hull.} Under the adopted convention, we know that $A_{jk}\geq0$ when $e\in\mathcal{H}$, and $A_{jk}<0$ if the evader exits through that edge. Hence, whenever $A_{jk}=0$, encirclement is maintained if and only if $\dot{A}_{jk}\ge 0$. The following proposition summarizes the above discussion.

\begin{proposition} \label{proposition:evader_str}
Following the conventions set up for area $A_{jk}$, encirclement is guaranteed if and only if
\begin{enumerate}
\item either the evader lies in the interior of the convex hull at all times, or
\item whenever an edge $p_j p_k$ is active, then there exists strategies for the active pursuers $j$ and $k$ to ensure that $\dot{A}_{jk}\geq 0$ for all possible values of $\mu \leq \mu_m$ and $\psi \in  [0, 2\pi)$.
\end{enumerate}
\end{proposition}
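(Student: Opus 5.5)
We will prove the statement by relating membership in $\mathcal{H}$ to the signs of the triangulation areas, and then arguing about the boundary behaviour of these continuous functions of time. The first step is to show that, under the counterclockwise convention, $e \in \mathcal{H}$ if and only if $A_{jk}\ge 0$ for every pair of consecutive indices $j,k$. Indeed, $A_{jk}$ is (half) the cross product of $p_j-e$ and $p_k-e$. By Assumption~\ref{ass:redudant} all pursuers are vertices of $\mathcal{H}$, so $\mathcal{H}$ is the intersection of the closed half-planes lying to the left of the directed edges $p_jp_k$. Nonnegativity of $A_{jk}$ is exactly the statement that $e$ lies in the half-plane associated with edge $p_jp_k$. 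Moreover, $e\in\operatorname{int}(\mathcal{H})$ if and only if all $A_{jk}>0$, and $e\in\partial\mathcal{H}$ if and only if some $A_{jk}=0$ with the others nonnegative; in the latter case the corresponding edge is active in the sense of the definition.

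For sufficiency, case 1 is immediate. For case 2, suppose encirclement fails. Since all positions are absolutely continuous (bounded speeds), each $A_{jk}(t)$ is continuous with derivative given by \eqref{eq:Ajkdot_decompose}. Let $t^\ast=\inf\{t: e(t)\notin\mathcal{H}\}$. By continuity $e(t^\ast)\in\partial\mathcal{H}$, so some edge $p_jp_k$ is active at $t^\ast$ with $A_{jk}(t^\ast)=0$, and $A_{jk}$ becomes negative immediately after $t^\ast$. The hypothesis gives $\dot A_{jk}\ge 0$ whenever the edge is active, for all evader inputs. We then need to conclude that $A_{jk}$ cannot decrease below zero. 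A comparison/Nagumo-type argument applies: on any interval where $A_{jk}\le 0$ we are at or past the boundary, and the active-edge strategy keeps $\dot A_{jk}\ge0$ there, contradicting $A_{jk}<0$ right after $t^\ast$. If the edge is only ``active'' on the line rather than the segment, we instead use that near $t^\ast$ the evader is close to the segment, by continuity.

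For necessity, suppose encirclement holds at all times but the evader is not always in the interior. Then at some time $t$ it lies on an edge, so $A_{jk}(t)=0$. If for some admissible $(\mu,\psi)$ we had $\dot A_{jk}<0$ whatever the active pursuers do, then the evader could hold that input on a short interval and make $A_{jk}<0$, i.e.\ escape $\mathcal{H}$ by the first step, contradicting encirclement. Hence the active pursuers must have strategies achieving $\dot A_{jk}\ge0$ against all $\mu\le\mu_m$, $\psi$, which is condition 2.

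The main obstacle is the sufficiency step with a non-strict inequality $\dot A_{jk}\ge 0$ only on the boundary. Pointwise derivative information at an instant does not by itself prevent crossing, so a genuine invariance argument is required. I would first handle this by requiring the active-pursuer strategy to be applied whenever $A_{jk}\le 0$ (a closed condition) and invoking Nagumo's theorem for the closed set $\{A_{jk}\ge0\}$. Failing that, I would fall back on showing strict positivity $\dot A_{jk}>0$ whenever $\mu$ is bounded strictly below the pursuer capability, leaving the equality case as a limiting argument.
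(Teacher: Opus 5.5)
Your route is the paper's own. The paper gives no separate proof, only the remark before the proposition: under the orientation convention, $A_{jk}\ge 0$ on $\mathcal{H}$ and $A_{jk}<0$ after an exit through $p_jp_k$, so whenever $A_{jk}=0$ encirclement is maintained if and only if $\dot A_{jk}\ge 0$. Your half-plane characterization and your necessity argument make explicit what the paper takes for granted.

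One repair in necessity. The negation of condition~2 is that \emph{every} choice of the active pursuers is beaten by \emph{some} $(\mu,\psi)$, not that a single $(\mu,\psi)$ beats every choice. The swap is legitimate only because $2\dot A_{jk}=T_p+T_e$ separates, so $\mu=\mu_m$, $\psi=\alpha-\pi/2$ is a universal worst case. Moreover, \eqref{eq:Tp_compact} gives $2\dot A_{jk}\le v_jd_{ek}+v_kd_{ej}-\mu_m d_{jk}$ for all pursuer headings. This bound is continuous in the state, so $\dot A_{jk}$ stays negative on a short interval whatever the pursuers do.

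The genuine gap is in sufficiency, where you suspect it, and the paper's remark skips the same step. Your claim that on intervals with $A_{jk}\le 0$ ``the active-edge strategy keeps $\dot A_{jk}\ge 0$ there'' is not supplied by condition~2. That condition constrains $\dot A_{jk}$ only on $\{A_{jk}=0\}$, and for $A_{jk}<0$ the edge is by definition not active. A pointwise boundary condition does not prevent crossing: $\dot a=-\sqrt{|a|}$ has $\dot a=0$ at $a=0$ yet admits $a(t)=-t^2/4$. Basic Nagumo gives invariance for \emph{all} solutions only under uniqueness, which fails here because the law switches and the evader input is arbitrary.

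In fact, if the pursuers use \eqref{eq:enc_edge} only at instants with $A_{jk}=0$ and pure pursuit otherwise, the guarantee is false. Pure pursuit is \eqref{eq:enc_edge} with $\varphi_j=\varphi_k=0$, so \eqref{enc_eq:Tp_final} gives $T_p=0$ on the active set. An evader moving along the outward normal at speed $\mu_m>0$ then crosses the edge along an absolutely continuous trajectory that satisfies the closed loop at every instant except the crossing one. So applying the edge law on $\{A_{jk}\le 0\}$, as you suggest, is a necessary reading of the strategy, not a convenience.

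Your strict-slack fallback misses the equality case, and that case is central. With $\varphi_j=\varphi_k=\sin^{-1}\mu_m$, which Corollary~\ref{corollary:range} and Theorem~\ref{theorem:enc_capture} allow and the simulations use, $T_p=\mu_m(d_{ek}+d_{ej})=\mu_m d_{jk}$. The worst-case evader then gives exactly $\dot A_{jk}=0$.

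What closes the argument is a one-sided Lipschitz bound plus Gr\"onwall.
\begin{enumerate}
\item Keep the edge law (with constant or Lipschitz $\varphi_l$) in force on $\{A_{jk}\le 0\}$ near the segment, and set $g:=T_p-\mu_m d_{jk}$.
\item $g$ is smooth wherever $e\neq p_j,p_k$, which $r_c>0$ ensures, and $g\ge 0$ on the active set by \eqref{eq:necess_suff_enc}.
\item Since $\nabla_z A_{jk}\neq 0$ there, the distance to $\{A_{jk}=0\}$ is $O(|A_{jk}|)$. Hence $g\ge -L|A_{jk}|$ on a neighborhood of $z(t^\ast)$.
\item By \eqref{eq:Te_compact}, $T_e\ge-\mu_m d_{jk}$ at \emph{every} configuration and for every evader input. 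Therefore $2\dot A_{jk}\ge g\ge LA_{jk}$ a.e.\ whenever $A_{jk}\le 0$ in that neighborhood.
\item Suppose $A_{jk}(t_1)<0$ for some $t_1$ close to $t^\ast$, and let $s=\sup\{t\le t_1: A_{jk}(t)\ge 0\}$. Then $A_{jk}(s)=0$, and $e^{-Lt/2}A_{jk}(t)$ is absolutely continuous and nondecreasing on $[s,t_1]$. So $A_{jk}(t_1)\ge 0$, a contradiction.
\end{enumerate}
This argument needs neither uniqueness of solutions nor any regularity of the evader's input, and it covers the equality case. It also replaces your assertion that $A_{jk}$ ``becomes negative immediately after $t^\ast$,'' which is only guaranteed along a sequence of times.
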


It is clear from Proposition \ref{proposition:evader_str} that the evader can only escape at time $t$ if an edge $p_jp_k$ is active, that is, $A_{jk}=0$ \emph{and} the strategies of the active pursuers and the evader are such that $\dot{A}_{jk} < 0$. We also know that when the evader is on the edge formed by active pursuers $p_j$ and $p_k$, the points $e$, $p_j$, and $p_k$ are collinear. Let $u_{jk}$ be the unit vector pointing from $p_j$ towards $p_k$, denoted by
\begin{equation}\label{enc_eq:u}
u_{jk}=\frac{p_k - p_j}{\|p_k - p_j\|} = \begin{bmatrix}
    \cos \alpha \\ \sin \alpha
\end{bmatrix},
\end{equation}
where $\alpha$ is given from \eqref{eq:beta_gamma_delta}.
When the edge $p_jp_k$ is active, the vectors from the active pursuers $p_j$ and $p_k$ to the evader can thus be expressed in terms of $u_{jk}$ as follows:
\begin{equation}
\begin{aligned}
& e - p_j = u_{jk} \|e - p_j\|, \;
e - p_k = -u_{jk} \|e- p_k\|.     
\end{aligned}  
\label{eq:enc_pjpk_u}
\end{equation}
Based on the above discussion, we now present a class of strategies for the pursuers that guarantee encirclement of the evader regardless of its chosen speed and heading angle. Specifically, the proposed heading angles of the pursuers depend on the current positions of the evader and the (active) pursuers.

\begin{theorem} \label{theorem:enc}
Let the evader's initial position satisfy $e(0) \in \mathcal{H}(0)$. Let each pursuer $i \in [n]$ choose any admissible heading $\theta_i\in [0,2\pi)$ when none of the edges are active. Whenever an edge $p_jp_k$ becomes active, the pursuers constituting the active edge switch to the following strategy:
\begin{equation} \label{eq:enc_edge}
        \begin{aligned}
            \begin{bmatrix}
           \cos \theta_j \\[0.5ex]
            \sin \theta_j
          \end{bmatrix} =R_{\varphi_j}\,\frac{e - p_j}{\|e - p_j\|} 
, \\
             \begin{bmatrix}
  \cos \theta_k \\[0.5ex]
  \sin \theta_k
\end{bmatrix} = R_{\varphi_k}^{\top}\,\frac{e - p_k}{\|e - p_k\|},
        \end{aligned}
    \end{equation}   
    \begin{equation} \label{eqn:enc_R_theta} \text{where} \quad
        R_{\varphi_{l \in \{j,k\}}} = 
        \begin{bmatrix}
            \cos \varphi_l & \sin \varphi_l \\
            -\sin \varphi_l & \cos \varphi_l
        \end{bmatrix},
    \end{equation}
with the control parameter $\varphi_j$ (respectively, $\varphi_k$) being the angle of rotation with respect to the vector $e-p_j$ (respectively, $e-p_k$), considered as positive in the outward direction from $\mathcal{H}$.
This strategy guarantees encirclement at all times, regardless of the speed and heading angle chosen by the evader, if and only if the angles $\varphi_j$ and $\varphi_k$ satisfy
\begin{equation} \label{eq:necess_suff_enc}
v_j d_{ek} \sin \varphi_j + v_k d_{ej} \sin \varphi_k \geq d_{jk} \mu_m. 
\end{equation}
\end{theorem}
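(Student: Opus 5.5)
We reduce the claim to Proposition~\ref{proposition:evader_str}: while the evader lies in the interior of $\mathcal{H}$ there is nothing to prove, so it suffices to show that, whenever an edge $p_jp_k$ is active (so $A_{jk}=0$), the strategy \eqref{eq:enc_edge} yields $\dot A_{jk}\ge 0$ for every $\mu\le\mu_m$ and every $\psi$ if and only if \eqref{eq:necess_suff_enc} holds. The plan is to compute $2\dot A_{jk}=T_p+T_e$ from \eqref{eq:Ajkdot_decompose} explicitly at an active configuration, using the collinearity relations \eqref{eq:enc_pjpk_u}. The pursuer term then becomes a deterministic expression in $\varphi_j,\varphi_k$, and the evader term can be minimized over its admissible inputs.

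For the pursuer term, write $T_p$ in the vector form underlying \eqref{eq:Tp}. With $J=\begin{bmatrix}0&-1\\1&0\end{bmatrix}$ and $\dot p_l=v_l[\cos\theta_l,\ \sin\theta_l]^\top$, one has $T_p=\dot p_j^\top J^\top(p_k-e)\,\cdot(\pm1)+\dots$; rather than track signs abstractly, I will work directly from \eqref{eq:Tp}. At an active configuration, $e-p_j=d_{ej}u_{jk}$ and $e-p_k=-d_{ek}u_{jk}$, so $y_k-y_e=d_{ek}\sin\alpha$, $x_e-x_k=-d_{ek}\cos\alpha$, $y_e-y_j=d_{ej}\sin\alpha$ and $x_j-x_e=-d_{ej}\cos\alpha$. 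Substituting these into \eqref{eq:Tp} gives
\begin{equation*}
T_p=d_{ek}v_j\sin(\alpha-\theta_j)+d_{ej}v_k\sin(\alpha-\theta_k).
\end{equation*}
Under \eqref{eq:enc_edge}, the heading of $p_j$ is $u_{jk}$ rotated by $R_{\varphi_j}$, which is a clockwise rotation by $\varphi_j$, so $\theta_j=\alpha-\varphi_j$. Since $(e-p_k)/\|e-p_k\|=-u_{jk}$ and $R_{\varphi_k}^\top$ rotates counterclockwise, $\theta_k=\alpha+\pi+\varphi_k$. Hence $\sin(\alpha-\theta_j)=\sin\varphi_j$ and $\sin(\alpha-\theta_k)=\sin\varphi_k$, and
\begin{equation*}
T_p=v_jd_{ek}\sin\varphi_j+v_kd_{ej}\sin\varphi_k.
\end{equation*}
The main obstacle is this bookkeeping: confirming that the sign convention ``positive outward'' for $\varphi_l$ agrees with the orientation convention that makes $A_{jk}>0$ in the interior. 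I would check it on a concrete configuration, e.g.\ $p_j=(0,0)$, $p_k=(1,0)$ with the interior on the side $y>0$, where outward motion of the pursuers should increase $A_{jk}$.

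For the evader term, \eqref{eq:Te_compact} gives $T_e=-\mu d_{jk}\sin(\alpha-\psi)$, whose minimum over $\mu\in[0,\mu_m]$ and $\psi$ is $-\mu_m d_{jk}$, attained at $\mu=\mu_m$ and $\psi=\alpha-\pi/2$, i.e.\ the evader heading perpendicular to the edge in the outward direction. Since $T_p$ does not depend on the evader's input, $\min_{\mu,\psi}(T_p+T_e)\ge 0$ is equivalent to \eqref{eq:necess_suff_enc}. This gives sufficiency: if \eqref{eq:necess_suff_enc} holds, then $\dot A_{jk}\ge 0$ at every active instant regardless of the evader's input, and Proposition~\ref{proposition:evader_str}(2) yields encirclement. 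It also gives necessity: if \eqref{eq:necess_suff_enc} fails at some active instant, the evader choosing the worst-case input above makes $\dot A_{jk}<0$ while $A_{jk}=0$, so $A_{jk}$ becomes negative immediately afterward and the evader exits $\mathcal{H}$. A minor point is that several edges could be active at once, when the evader sits at a pursuer vertex; I would handle this by applying the same argument edge by edge, or exclude it since $d_{ej}=0$ there.
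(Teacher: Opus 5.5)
Your proposal is correct and follows essentially the same route as the paper. You evaluate $T_p$ at the collinear (active) configuration to get $T_p=v_jd_{ek}\sin\varphi_j+v_kd_{ej}\sin\varphi_k$, take the worst case of $T_e=-\mu d_{jk}\sin(\alpha-\psi)$ over the evader's inputs, and invoke Proposition~\ref{proposition:evader_str} for both directions. The only difference is cosmetic: the paper writes $T_p=\dot p_j^\top(e-p_k)^\perp+\dot p_k^\top(p_j-e)^\perp$ with $\dot p_j=v_jR_{\varphi_j}u_{jk}$ and $\dot p_k=-v_kR_{\varphi_k}^\top u_{jk}$, whereas you substitute the collinearity relations directly into the coordinate formula and identify $\theta_j=\alpha-\varphi_j$ and $\theta_k=\alpha+\pi+\varphi_k$, and this identification checks out.
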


\begin{proof}
Under the proposed strategy when edge $p_jp_k$ is active, substituting \eqref{eq:enc_pjpk_u} into \eqref{eq:enc_edge} yields
\begin{equation} \label{enc_eq:pjpk_dynamics_u}
\begin{aligned}
\dot{p}_j &= v_j \begin{bmatrix}
    \cos \theta_j \\ \sin \theta_j
\end{bmatrix} =v_j\, R_{\varphi_j   }\, u_{jk} = v_j \begin{bmatrix}
    \cos (\alpha-\varphi_j) \\ \sin (\alpha-\varphi_j)
\end{bmatrix}
, \\  \dot{p}_k &= v_k \begin{bmatrix}
    \cos \theta_k \\ \sin \theta_k
\end{bmatrix} = -v_k\, R_{\varphi_k}^{\top}\, u_{jk} = -v_k \begin{bmatrix}
    \cos (\alpha+\varphi_k) \\ \sin (\alpha+\varphi_k)
\end{bmatrix}. 
\end{aligned}
\end{equation}

For a general vector $v=[v_x,v_y]^{\top}$, let $v^\perp:=[-v_y,v_x]^{\top}$, which is orthogonal to $v$. Then, $T_p$ from \eqref{eq:Tp} is written as 
 \begin{align} 
  T_p & = \dot{p}^{\top}_j  (e-p_k)^{\perp}  + \dot{p}^{\top}_k  (p_j-e)^{\perp} \nonumber
  \\ &= -\|e-p_k\| (v_j R_{\varphi_j}u_{jk})^{\top} u_{jk}^{\perp} + \nonumber \\ & \quad \quad \|p_j-e\| (v_k R_{\varphi_k}^{\top}u_{jk})^{\top} u_{jk}^{\perp}, \label{enc_eq:T_p_second}
  \\ & =-v_j \|e-p_k\| (-\sin \varphi_j) + v_k \|p_j-e\| (\sin \varphi_k), \nonumber 
  \\ & =v_j d_{ek} \sin \varphi_j + v_k d_{ej} \sin \varphi_k, \label{enc_eq:Tp_final}
 \end{align}  
 where the third equality can be obtained from \eqref{enc_eq:T_p_second} by using \eqref{enc_eq:pjpk_dynamics_u} and \eqref{enc_eq:u}.

\smallskip

Adding $T_p$ from \eqref{enc_eq:Tp_final} and $T_e$ obtained in \eqref{eq:Te_compact} \footnote{The evader term $T_e$ does not depend upon pursuer strategies.} yields
\begin{equation}\label{enc_eq:final_Ajkdot_T2}
2\dot{A}_{jk}=v_j d_{ek} \sin \varphi_j+ v_k d_{ej} \sin \varphi_k-d_{jk}\mu \sin(\alpha-\psi). 
\end{equation}

From Proposition \ref{proposition:evader_str}, we know that encirclement is guaranteed if and only if $\dot{A}_{jk} \geq 0$, or equivalently, 
\begin{equation} \label{enc_eq:varphi_range}
v_j d_{ek} \sin \varphi_j + v_k d_{ej} \sin \varphi_k \geq d_{jk} \mu \sin(\alpha-\psi) 
\end{equation}
for all $\psi \in [0, 2\pi)$ and all $\mu \in [0,\mu_m]$. In other words, the pursuers must ensure that $v_j d_{ek} \sin \varphi_j + v_k d_{ej} \sin \varphi_k \geq d_{jk} \mu_m$, since $\mu_m \geq \mu$ and $\sin(\alpha-\psi) \in [-1,1]$ for any choice of $\psi$. For any value of $\varphi_j$ and $\varphi_k$ not satisfying this condition, there always exists an evader strategy (a set of speed and heading) that can make the closed-loop expression of $\dot{A}_{jk}$ from \eqref{enc_eq:final_Ajkdot_T2} negative, leading to violation of the encirclement condition. This completes the proof.
\end{proof}
\begin{figure}[t!]
\centering
\vspace{8pt}
\includegraphics[width=0.4\textwidth]{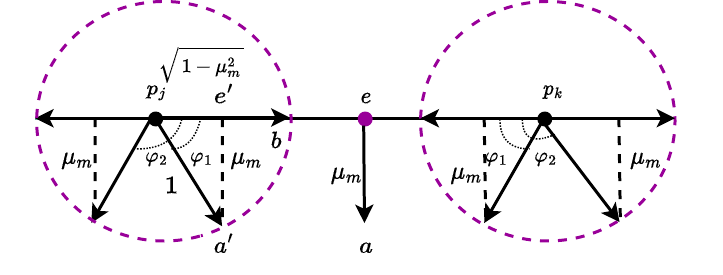}
\caption{Range of strategies for the active pursuers that guarantee encirclement of the evader, regardless of its strategy, when $v_i=1$ for all $i \in [n]$ and $\mu_m \leq 1$. In this figure, the angles $\varphi_1$ and $\varphi_2$ are given by $\varphi_1=\sin^{-1}(\mu_m)$ and $\varphi_2=\pi-\sin^{-1}(\mu_m)$. Thus, $\varphi_l \in [\varphi_1,\varphi_2]$, where $l \in \{j,k\}$ and the angles are measured from the direction of the respective unit vectors towards the evader for $p_j$ and $p_k$ and considered as positive in the outward direction from $\mathcal{H}$.} 
\label{fig:edge_strategy}
\end{figure}

A key result involving the pursuers' speed and the angles $\varphi_j$ and $\varphi_k$, which establish sufficient conditions for encirclement, is summarized in Corollary \ref{corollary:range}. 

\begin{corollary} \label{corollary:range}
Suppose $v_i=1$ for all $i \in [n]$ and $\mu_m \leq 1$. Let active pursuers choose the angles $\varphi_j$ and $\varphi_k$ from $\varphi_l \in [\sin^{-1}(\mu_m), \pi-\sin^{-1}(\mu_m)]$, for $l \in \{j,k\}$. Then, \eqref{eq:necess_suff_enc} is satisfied, and encirclement is guaranteed at all times, irrespective of the strategy chosen by the evader. 
\end{corollary}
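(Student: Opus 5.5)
The plan is to verify inequality \eqref{eq:necess_suff_enc} directly and then invoke Theorem \ref{theorem:enc}, which already establishes that \eqref{eq:necess_suff_enc} is sufficient for encirclement at all times, irrespective of the evader's speed and heading.

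First, I will bound the sines from below. The function $\sin$ is concave on $[0,\pi]$ and symmetric about $\pi/2$, so on the interval $[\sin^{-1}(\mu_m), \pi-\sin^{-1}(\mu_m)]$ its minimum is attained at the endpoints. Since $\sin^{-1}(\mu_m)\in[0,\pi/2]$ whenever $0\le\mu_m\le 1$, the interval is nonempty and contained in $[0,\pi]$. Hence $\sin\varphi_l \ge \mu_m$ for $l\in\{j,k\}$.

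Second, with $v_j=v_k=1$, the left-hand side of \eqref{eq:necess_suff_enc} satisfies
\begin{equation*}
d_{ek}\sin\varphi_j + d_{ej}\sin\varphi_k \;\ge\; \mu_m\,(d_{ek}+d_{ej}).
\end{equation*}
Third, I use the geometry of an active edge. When $p_jp_k$ is active, \eqref{eq:enc_pjpk_u} shows that $e$ lies on the segment between $p_j$ and $p_k$, so $d_{ej}+d_{ek}=d_{jk}$. More generally, the triangle inequality gives $d_{ej}+d_{ek}\ge d_{jk}$, and that alone suffices here. Therefore the left-hand side is at least $\mu_m d_{jk}$, which is exactly \eqref{eq:necess_suff_enc}.

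Finally, Theorem \ref{theorem:enc} yields that $\dot A_{jk}\ge 0$ whenever an edge is active, for every $\mu\le\mu_m$ and every $\psi$. Proposition \ref{proposition:evader_str} then gives encirclement at all times.

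There is no serious obstacle. The only points needing care are the nonnegativity of $d_{ek}$ and $d_{ej}$, so that multiplying the bound $\sin\varphi_l\ge\mu_m$ by them preserves the inequality, and the nonemptiness of the angle interval, which relies on $\mu_m\le 1$.
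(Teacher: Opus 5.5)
Your proof is correct and matches the paper's argument: you bound $\sin\varphi_l \ge \mu_m$ on the given interval, multiply by the nonnegative distances, and use $d_{ej}+d_{ek}=d_{jk}$ on the active edge to get \eqref{eq:necess_suff_enc}, then invoke Theorem~\ref{theorem:enc}. Your added remarks (the triangle inequality alone suffices since $\mu_m \ge 0$, and the interval is nonempty because $\mu_m\le 1$) are valid small refinements that the paper leaves implicit.
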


\begin{proof}
When any edge $p_jp_k$ becomes active, the active pursuers with $v_j=v_k=1$ choose $\varphi_l \in [\sin^{-1}(\mu_m), \pi-\sin^{-1}(\mu_m)]$, for $l \in \{j,k\}$, we have $\sin \varphi_j \geq \mu_m$, and $\sin \varphi_k \geq \mu_m$. Therefore,   
\begin{equation*}
    v_j d_{ek} \sin \varphi_j + v_k d_{ej} \sin \varphi_k \geq (d_{ek} + d_{ej}) \mu_m = d_{jk} \mu_m,
\end{equation*}
i.e., \eqref{eq:necess_suff_enc} is satisfied at all times. Thus, encirclement is guaranteed at all times regardless of the evader's strategy. 
\end{proof}

Figure \ref{fig:edge_strategy} provides a schematic illustration of the kinematics associated with the proposed range of strategies in Corollary \ref{corollary:range} that guarantee encirclement. Intuitively, when the evader activates an edge defined by consecutive pursuers $p_j$ and $p_k$ leading to $A_{jk} = 0$, pursuers $p_j$ and $p_k$ move outward from the convex hull at a rate sufficient to ensure that even the worst-case evader motion keeps the evader confined within the hull. Therefore, the evader is unable to escape the convex hull $\mathcal{H}$ irrespective of its chosen control input.

\begin{remark}
The pursuer strategies described in Corollary \ref{corollary:range} are decentralized since each active pursuer can independently select the angles $\varphi_j$ and $\varphi_k$ without communicating with any other pursuer. It only needs to know the position of the other pursuers and the evader. Furthermore, the encirclement guarantee holds irrespective of the evader’s position when the edge $p_jp_k$ becomes active, i.e., independent of the values of $d_{ej}$ and $d_{ek}$. In contrast to the condition \eqref{eq:necess_suff_enc} established in Theorem \ref{theorem:enc}, the sufficient condition in Corollary \ref{corollary:range} can be verified a priori, before the pursuit begins.
\end{remark}

\begin{remark}
It follows from Theorem \ref{theorem:enc} and Corollary \ref{corollary:range} that the encirclement guarantee holds even when the value of $\mu_m$ is unknown to the pursuers, as long as we have homogeneous pursuers not slower than the evader, that is, $v_j=v_k=1$ and $\mu_m \leq 1$. In that case, the pursuers would move normally outward from the active edge at maximum speed normalized to unity. This is evident by substituting $v_j=v_k=1$ and $\mu_m=1$ in \eqref{eq:necess_suff_enc} leading to $d_{ek} \sin \varphi_j + d_{ej} \sin \varphi_k \geq d_{jk}$, which holds only when $\varphi_j=\varphi_k=\dfrac{\pi}{2}$.
\end{remark}

\begin{remark}
It can be observed from the expression of $\dot{A}_{jk}$ from \eqref{enc_eq:final_Ajkdot_T2} that $\varphi_j^\star=\varphi_k^\star=\dfrac{\pi}{2}$ and $\psi^\star=\alpha-\dfrac{\pi}{2}$ are the optimal solutions to $\max_{\varphi_j^\star, \varphi_k^\star}   \min_{\psi^\star} \dot{A}_{jk}(\psi, \varphi_j, \varphi_k)$, which corresponds to the strategies to move normally outwards from $\mathcal{H}$ for both the pursuers and the evader.   
\end{remark}

\section{Encirclement Guaranteed Capture Strategy for the Pursuers} \label{section:proposed_approach}

In the previous section, we characterized pursuer strategies that guarantee encirclement. We now focus on deriving finite-time capture guarantees. To this end, we assume $v_i=1$ for $i \in [n]$, and $\mu_m<1$, that is, all the pursuers move with unit speed, and the evader is strictly slower than the pursuers. Under these assumptions, we now show that a subset of strategies from the strategy set identified in Corollary~\ref{corollary:range} is sufficient to guarantee finite-time capture of the evader. 

\begin{theorem}\label{theorem:enc_capture}
Suppose $v_i=1$ for $i \in [n]$, $\mu_m<1$, and the evader is initially encircled, that is, $e(0) \in \mathcal{H}(0)$. Then, the following pursuit strategy is sufficient to ensure that the evader remains encircled throughout pursuit and is captured in finite time regardless of the strategy it adopts. 
\begin{enumerate}[leftmargin=*, itemsep=1ex, label=\textbf{\arabic*.}]
    \item \textbf{Interior Phase:} 
    When the evader $e(t) \in \operatorname{int}(\mathcal{H}(t))$, all pursuers follow the pure pursuit strategy given by
    \begin{equation} \label{eq:pure_pursuit}
        \dot{p}_i = \begin{bmatrix}
        \cos \theta_i \\[0.5ex]
        \sin \theta_i
    \end{bmatrix} = \frac{e - p_i}{\|e - p_i\|},
        \qquad \forall i \in [n].
    \end{equation}
    \item \textbf{Edge Phase:} 
When any edge $p_jp_k$ becomes active, the pursuers constituting the active edge switch to the strategy proposed in \eqref{eq:enc_edge} with any $\varphi_l \in [\sin^{-1}(\mu_m),\frac{\pi}{2}-\sin^{-1}(1-\mu_m)]$, where $l \in \{j,k\}$, while the inactive pursuers continue to follow pure pursuit. 
\end{enumerate}
\end{theorem}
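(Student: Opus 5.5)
The plan has two parts. Encirclement comes directly from Corollary~\ref{corollary:range}. Finite-time capture comes from a Lyapunov argument based on the sum of evader--pursuer distances
\[
V(t)=\sum_{i=1}^n d_{ei}(t),\qquad d_{ei}=\|e-p_i\|.
\]

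\textbf{Step 1 (encirclement).} I will show that $[\sin^{-1}(\mu_m),\tfrac{\pi}{2}-\sin^{-1}(1-\mu_m)]\subseteq[\sin^{-1}(\mu_m),\pi-\sin^{-1}(\mu_m)]$ and that it is nonempty.
\begin{itemize}
\item The upper endpoint is at most $\pi/2$, so the inclusion holds.
\item We have $\sin\bigl(\tfrac{\pi}{2}-\sin^{-1}(1-\mu_m)\bigr)=\sqrt{1-(1-\mu_m)^2}=\sqrt{2\mu_m-\mu_m^2}$.
\item This is at least $\mu_m$ because $2\mu_m-\mu_m^2\ge\mu_m^2$ for $\mu_m\le 1$. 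Since both endpoints lie in $[0,\pi/2]$ where $\sin$ is increasing, the upper endpoint is at least $\sin^{-1}(\mu_m)$.
\end{itemize}
Hence every admissible $\varphi_l$ satisfies $\sin\varphi_l\ge\mu_m$. Corollary~\ref{corollary:range} (together with Theorem~\ref{theorem:enc} and Proposition~\ref{proposition:evader_str}) then gives $e(t)\in\mathcal{H}(t)$ for all $t\le t_c$. Pure pursuit is an admissible choice when no edge is active, so this covers the interior phase as well.

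\textbf{Step 2 (distance dynamics).} Let $w_i=(e-p_i)/d_{ei}$. Then $\dot d_{ei}=w_i^\top\dot e-w_i^\top\dot p_i$.
\begin{itemize}
\item For a pursuer on pure pursuit, $w_i^\top\dot p_i=1$, so $\dot d_{ei}\le \mu_m-1$.
\item For an active pursuer, \eqref{enc_eq:pjpk_dynamics_u} gives $w_j^\top\dot p_j=\cos\varphi_j$. The upper bound $\varphi_l\le\tfrac{\pi}{2}-\sin^{-1}(1-\mu_m)$ is exactly what yields $\cos\varphi_l\ge 1-\mu_m$.
\end{itemize}
Bounding each active pursuer separately only gives $\dot d_{ej}\le 2\mu_m-1$, which is useless when $\mu_m\ge 1/2$. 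The key observation is that on an active edge, \eqref{eq:enc_pjpk_u} gives $w_j=u_{jk}=-w_k$. The evader's contributions therefore cancel exactly:
\[
\dot d_{ej}+\dot d_{ek}=-(\cos\varphi_j+\cos\varphi_k)\le-2(1-\mu_m).
\]
Summing over all pursuers gives $\dot V\le -n(1-\mu_m)$ in both phases. In the interior phase this is immediate termwise. In the edge phase, the active pair contributes at most $-2(1-\mu_m)$ and the remaining $n-2$ pursuers contribute at most $-(1-\mu_m)$ each.

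\textbf{Step 3 (capture and bound).} Before capture every $d_{ei}>r_c$, so $V>nr_c$. Integrating $\dot V\le -n(1-\mu_m)$ then shows that capture occurs no later than
\[
t_c\le \frac{V(0)-nr_c}{n(1-\mu_m)},
\]
which is finite. Since all $d_{ei}\ge r_c>0$ up to $t_c$, the directions $w_i$ and the strategies are well defined.

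\textbf{Main obstacle.} The delicate part is rigor under switching. The closed loop is discontinuous, because pursuers change laws when $A_{jk}$ hits zero. The evader may also slide along an edge, or chatter on and off it. I would treat trajectories in the Filippov or Carathéodory sense and note two facts:
\begin{itemize}
\item $V$ is locally Lipschitz, being a sum of norms away from zero.
\item The bound $\dot V\le -n(1-\mu_m)$ holds for every vector field in the convex hull of the active modes, since it holds in each mode and is preserved under convex combination.
\end{itemize}
This gives the inequality almost everywhere, which suffices for integration. Two degenerate situations remain. If two adjacent edges are active simultaneously, the evader sits at a vertex, so $d_{ei}=0<r_c$; this means capture has already happened. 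If non-adjacent edges were active simultaneously, $\mathcal{H}$ would be degenerate, which non-redundancy (Assumption~\ref{ass:redudant}) excludes. Hence at most one edge is active before capture, and the argument goes through.
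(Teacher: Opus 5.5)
Your proposal is correct and follows the paper's proof essentially step for step. Encirclement comes from the angle interval lying inside the range of Corollary~\ref{corollary:range}, and capture from the Lyapunov function $V=\sum_i\|p_i-e\|$: the evader terms cancel exactly on the active edge because $w_j=u_{jk}=-w_k$, and with $\cos\varphi_l\ge 1-\mu_m$ the common rate $-n(1-\mu_m)$ integrates to $t_c\le (V_0-nr_c)/(n(1-\mu_m))$. Your extra checks (nonemptiness of the angle interval, at most one edge active before capture, and the rate bound surviving convex combinations of modes) are small additions the paper leaves implicit, since it treats switching only by noting the bound is common to both phases.
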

\begin{proof}
Note that the proposed edge-phase angle range $\varphi_l \in [\sin^{-1}(\mu_m),\frac{\pi}{2}-\sin^{-1}(1-\mu_m)]$, where $l \in \{j,k\}$ and $\mu_m \in [0,1)$, is a subset of the admissible range  $\varphi_l \in [\sin^{-1}(\mu_m),\pi-\sin^{-1}(\mu_m)]$, for which encirclement is guaranteed at all times following Corollary \ref{corollary:range}.

We prove finite-time capture of the evader with the proposed pursuers' strategy using a Lyapunov-based analysis. We select the following candidate Lyapunov function
\begin{equation}\label{eq:lyapunov}
V = \sum_{i=1}^{n} \|p_i - e\| = \sum_{i=1}^{n} d_i,
\end{equation}
which is positive definite and represents the sum of distances between the pursuers and the evader. We now analyze $\dot{V}$ for different phases of pursuit assuming that $\|p_i-e\| > r_c$ for all the pursuers, that is, the evader is not yet captured.

\noindent \textbf{a) Interior Phase.} When $e \in \operatorname{int}(\mathcal{H}(t))$, all pursuers use pure pursuit. We then compute the time derivative of $V$ as
\begin{align}
    \dot{V} &= \sum_{i=1}^{n} \dot{d}_i= \sum_{i=1}^{n} \frac{(p_i-e)^\top \dot{p}_i}{\|e - p_i\|} + \frac{(e-p_i)^\top \dot{e}}{\|e - p_i\|}  \label{eq:vdot_general} \\
	&= \sum_{i=1}^{n} \left(\frac{(p_i-e)^\top}{\|e - p_i\|}\right) \left(\frac{e-p_i}{\|e - p_i\|}\right) + \frac{(e-p_i)^\top \dot{e}}{\|e - p_i\|} \nonumber \\
	&= \sum_{i=1}^{n} -1 + \frac{(e-p_i)^\top \dot{e}}{\|e - p_i\|} \leq -n(1-\mu_m),
	\label{eq:vdot_interior}
\end{align}
where the inequality follows from the Cauchy-Schwarz inequality, under which
\begin{align*}
     & |(e - p_i)^T \dot{e}| \leq \|e - p_i\| \|\dot{e}\| = \mu \|e - p_i\|
    \\  \implies & \frac{(e-p_i)^\top \dot{e}}{\|e - p_i\|}-1 \leq \mu-1 \leq \mu_m-1. 
\end{align*}
Since $\mu_m < 1$, the Lyapunov rate $\dot{V} \leq -n(1-\mu_m)$ is bounded by a finite negative quantity. 

In our setting, the capture takes place when at least one pursuer is within distance $r_c$ from the evader, that is, at a finite positive value of the Lyapunov function. Thus, the constant rate of decay of the Lyapunov function results in finite-time capture. In particular, the evader is captured in finite time if it stays in the interior of the convex hull $\mathcal{H}(t)$ throughout the pursuit.

\noindent {\bf b) Edge Phase.} Suppose an edge $p_jp_k$ is active at time $t$. In this case, the Lyapunov rate can be expressed as
\begin{equation}
\dot{V}= \dot{d}_j + \dot{d}_k + \sum^n_{i=1, i \notin \{j,k\}} \dot{d}_i,   
\end{equation} 

Since pursuers that are not active adopt pure pursuit, following the earlier analysis, we have 
\begin{equation}\label{eq:Lyapunov_inactive}
\dot{d}_i \leq -(1-\mu) \leq -(1-\mu_m),  \quad \text{for} \quad i \neq j, k. 
\end{equation}
 
We now consider the active pursuers. 
We compute $\dot{d}_j$ by substituting the expression of $\dot{p}_j$ from \eqref{eq:enc_edge} with $v_j=1$ as
\begin{align}
    \dot{d}_j & = \frac{(p_j-e)^\top \dot{p}_j}{\|e - p_j\|} + \frac{(e-p_j)^\top \dot{e}}{\|e - p_j\|}  \nonumber \\
	&= \left(\frac{(p_j-e)^\top}{\|e - p_j\|}\right) \left( R_{\varphi_j}\, \frac{e-p_j}{\|e - p_j\|}\right) + \frac{(e-p_j)^\top \dot{e}}{\|e - p_j\|} \nonumber \\
	&= -\cos \varphi_j + \frac{(e-p_j)^\top \dot{e}}{\|e - p_j\|}. \label{eq:Vj_dot}
\end{align}
On similar lines, we compute
\begin{equation} \label{eq:Vk_dot}
\dot{d}_k= -\cos \varphi_k + \frac{(e-p_k)^\top \dot{e}}{\|e - p_k\|}.
\end{equation}
Adding \eqref{eq:Vj_dot} and \eqref{eq:Vk_dot}, we obtain
\begin{align}
    \dot{d}_j+\dot{d}_k & = -\cos \varphi_j-\cos \varphi_k+ \frac{(e-p_j)^\top \dot{e}}{\|e - p_j\|}+\frac{(e-p_k)^\top \dot{e}}{\|e - p_k\|}  \nonumber
    \\ & = -\cos \varphi_j-\cos \varphi_k, \label{eq:vjdot_vkdot_sum_final}
\end{align} 
where the last equality follows because from \eqref{eq:enc_pjpk_u}, we have
\begin{equation}
\begin{aligned}
& \frac{(e - p_j)^\top}{\|e - p_j\|} = u^{\top}_{jk} , \;
\frac{(e - p_k)^\top}{\|e - p_k\|} = -u^{\top}_{jk}.     
\end{aligned}  
\end{equation}

Combining the terms for the inactive and active pursuers from \eqref{eq:Lyapunov_inactive} and \eqref{eq:vjdot_vkdot_sum_final}, we obtain
\begin{align}
\dot{V} & \leq -(n-2)(1-\mu_m)-\cos \varphi_j-\cos \varphi_k, \nonumber
\\ & \leq -n(1-\mu_m), 
\label{eq:vdot_common_rate}
\end{align}
as before; the last inequality follows because $\varphi_l \in [\sin^{-1}(\mu_m),\frac{\pi}{2}-\sin^{-1}(1-\mu_m)]$ implies $1-\mu_m \leq \cos \varphi_l \leq \sqrt{1-\mu_m^2}$ for $l \in \{j,k\}$. Therefore, the finite-time convergence established in the interior phase continues to hold when the evader is on an edge, and also when the evader potentially switches at an arbitrary frequency between the interior and edge phases, since we have found a common bound that applies to both phases. 

Integrating \eqref{eq:vdot_common_rate} from $t=0$ to $t_c$, we obtain
\begin{align}
    t_c & \leq \frac{V_0-V_c}{n(1-\mu_m)} \leq \frac{V_0-nr_c}{n(1-\mu_m)}, \label{eq:common_bound}
\end{align}
where $V_0$ is the value of the Lyapunov function at $t=0$ and $V_c$ is the value of the Lyapunov function at the time of capture $t_c$. 
Note that capture occurs at the first time instant when $d_i = r_c$ for some $i$, which implies that throughout the game $d_i=\|p_i-e\|\geq r_c$ for all $i \in [n]$, with equality holding at the time of capture if the pursuer $i$ captures the evader. This implies $V_c \geq nr_c$, with equality when all the pursuers simultaneously reach a distance $r_c$ from the evader. 
\end{proof}

The strategy of active pursuers in Theorem \ref{theorem:enc_capture} switches between two control laws at the surface $A_{jk}=0$ (for any $j,k \in [n]$), making the right-hand side of the pursuer dynamics discontinuous. We now prove the existence of solutions for the resulting discontinuous ordinary differential equation (ODE) under mild assumptions on the strategy of the evader. Let $ z = (p_1, \ldots, p_n, e)^{\top} \in \mathbb{R}^{2(n+1)}$ be the augmented vector consisting of all agent positions stacked together. The closed-loop dynamics under the strategy proposed in Theorem \ref{theorem:enc_capture} can be expressed as a single ODE as
\begin{equation}\label{eq:switch}
\dot z=f(z)=
\begin{cases}
f_{\mathrm{int}}(z),
& \text{if} \, A_{jk}(z)>0,\ \forall\, j,k\in[n],\\[4pt]
f^{jk}_{\mathrm{edge}}(z),
&  \text{if} \; \exists\, j,k\in[n]\ \text{s.t.}\ A_{jk}(z)=0,
\end{cases}
\end{equation}
where
\begin{align}
 & f_{\mathrm{int}}(z) = \left(
    f_1,\ldots,f_n,\dot{e}
  \right)^\top \in \mathbb{R}^{2(n+1)}, \label{eq:fint}\\
 & f^{jk}_{\mathrm{edge}}(z) = \left(f_1,\ldots,f^{\mathrm{edge}}_j,f^{\mathrm{edge}}_k,\ldots,f_n,
    \dot{e}
  \right)^\top \in \mathbb{R}^{2(n+1)}, \label{eq:fedge} \\
 &f_i  =  \frac{e-p_i}{\|e-p_i\|} \quad \forall i \in [n], \\
 &f^{\mathrm{edge}}_j =  R_{\varphi_j}\frac{e-p_j}{\|e-p_j\|}, \; \; f^{\mathrm{edge}}_k= R_{\varphi_k}^{\top}\frac{e-p_k}{\|e-p_k\|},
\end{align}
where inactive pursuers $i \in [n]\setminus\{j,k\}$ follow pure pursuit and $R_{\varphi_{l \in \{j,k\}}}$ denotes the rotation matrix in \eqref{eqn:enc_R_theta}.

We now seek solutions in the sense of Filippov \cite{ref_cortes_cortes2008discontinuous}, wherein the discontinuous ODE~\eqref{eq:switch} is replaced by the differential inclusion $\dot{z}(t) \in \mathcal{F}[f](z(t))$, where $\mathcal{F}[f](z)$ denotes the closed convex hull of the limiting values of $f$ in a neighborhood of $z$, and we verify the existence of such solutions in the following proposition.

\begin{proposition}
\label{proposition:existence}
Consider the closed-loop system \eqref{eq:switch} with $v_i = 1$ for all $i \in [n]$, $\mu_m < 1$, $e(0) \in \mathcal{H}(0)$, the strategies of the pursuers are in accordance with Theorem $2$, and the strategy chosen by the evader is a continuous function of time, i.e., $\dot{e}$ is continuous. Then, for every initial condition $z_0 \in \mathcal{X} := \bigcap_{(j,k) \in \mathcal{N}_0} \{z : A_{jk}(z) \geq 0\}$, where $\mathcal{N}_0$ is the set of all consecutive integers modulo $n$ from $[n]$, there exists a Filippov solution $z : [0, \infty) \to \mathcal{X}$ of \eqref{eq:switch}.
\end{proposition}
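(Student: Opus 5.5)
We will establish existence through the standard Filippov existence theorem for differential inclusions, and then show that the solution stays in $\mathcal{X}$ and can be extended to all of $[0,\infty)$.

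\textbf{Step 1: regularity of the right-hand side.} We first make the system autonomous in the augmented state $(z,t)$, with $\dot t = 1$. This is harmless because $\dot e$ is a continuous function of time.

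We then check that $f$ is measurable and locally essentially bounded. Measurability holds because $f$ is piecewise defined on the sets $\{A_{jk}>0\}$ and $\{A_{jk}=0\}$, which are Borel since each $A_{jk}$ in \eqref{eq:A_jk} is a polynomial. For boundedness, every pursuer component has unit norm, and $\|\dot e\|\le \mu_m<1$. Hence $\|f(z)\|\le \sqrt{n+\mu_m^2}$ uniformly.

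One caveat is the unit vectors $(e-p_i)/\|e-p_i\|$, which are undefined at $p_i=e$. Before capture we have $d_i>r_c>0$, so we restrict the state to the open set $\{z: \min_i\|p_i-e\|>r_c\}$, or define $f$ arbitrarily but boundedly outside it. Under these conditions the Filippov map $\mathcal{F}[f]$ is upper semicontinuous with nonempty, compact, convex values (Cortés \cite{ref_cortes_cortes2008discontinuous}). A local Filippov solution therefore exists from every $z_0$.

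\textbf{Step 2: invariance of $\mathcal{X}$.} This is the main obstacle. At a point where some $A_{jk}=0$, the Filippov set is the convex hull of limits of $f_{\mathrm{int}}$, taken from the side $A_{jk}>0$, and of $f^{jk}_{\mathrm{edge}}$. The limit from $A_{jk}<0$ never enters, because $f$ on the infeasible side is not part of the construction. More carefully, we extend $f$ to $\{A_{jk}<0\}$ by $f^{jk}_{\mathrm{edge}}$ as well, so the switching surface separates only $f_{\mathrm{int}}$ and $f_{\mathrm{edge}}$.

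We then compute $\nabla A_{jk}\cdot w$ for every $w\in\mathcal{F}[f](z)$. For the edge field, the proof of Theorem~\ref{theorem:enc} together with Corollary~\ref{corollary:range} gives
\[
2\dot A_{jk}=\sin\varphi_j d_{ek}+\sin\varphi_k d_{ej}-d_{jk}\mu\sin(\alpha-\psi)\ge d_{jk}(\mu_m-\mu)\ge 0 .
\]
For pure pursuit on the active edge, both pursuers move along the line $p_jp_k$, toward $e$. Their contribution to $T_p$ in \eqref{eq:Tp_compact} is therefore zero, and we get $2\dot A_{jk}=T_e$, which may be negative.

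So a Filippov convex combination $w$ may have $\nabla A_{jk}\cdot w<0$. We then want to argue viability instead: the inclusion admits some selection pointing into $\mathcal{X}$. The tangent-cone condition $\mathcal{F}[f](z)\cap T_{\mathcal{X}}(z)\neq\emptyset$ holds, because $f^{jk}_{\mathrm{edge}}(z)$ itself lies in $\mathcal{F}[f](z)$ and satisfies $\nabla A_{jk}\cdot f^{jk}_{\mathrm{edge}}\ge0$.

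At corners, where several $A_{jk}=0$ simultaneously, we must check this for all active constraints at once. The evader can lie on at most two edges, and then only at a shared vertex, which is excluded before capture because $d_i>r_c$. Hence at most one edge is active, and $\mathcal{X}$ is locally a smooth half-space, provided $\nabla A_{jk}\neq0$, which holds since $d_{jk}>0$.

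The Nagumo/Haddad viability theorem for upper semicontinuous inclusions with compact convex values then yields a local solution remaining in $\mathcal{X}$. Continuity of $\dot e$ is used here to make the evader term a continuous time-dependent input.

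\textbf{Step 3: global extension.} The right-hand side is uniformly bounded, so solutions cannot blow up in finite time. Standard continuation therefore extends the viable solution to a maximal interval, which is all of $[0,\infty)$. This holds up to the capture time $t_c$, after which the dynamics may be frozen, or continued with any bounded selection.
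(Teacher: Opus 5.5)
Your argument is correct, and it takes a genuinely different and more complete route than the paper. The paper's proof only checks the two hypotheses of Cort\'es' Proposition~3, and stops there:
\begin{itemize}
\item \emph{Measurability:} each switching set $\mathcal{S}_{jk}$ is a smooth hypersurface (regular value theorem, since $\nabla_z A_{jk}\neq 0$), hence Lebesgue-null, so $f$ is continuous off a null set.
\item \emph{Boundedness:} $\|f\|^2\le n+\mu_m^2$.
\end{itemize}
That yields only \emph{local} existence of \emph{some} Filippov solution. The paper never verifies that the solution stays in $\mathcal{X}$, and never extends it to $[0,\infty)$. It also treats $\dot e$ as if it were a function of $z$, and never specifies $f$ on $\{A_{jk}<0\}$.

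You proceed differently at each step:
\begin{itemize}
\item You get measurability more cheaply, from Borel pieces, since $A_{jk}$ is polynomial.
\item You handle the time dependence by augmenting the state with $t$.
\item You supply the missing invariance through a Nagumo--Haddad viability theorem. The tangency check uses the edge-phase computation from Theorem~\ref{theorem:enc} and Corollary~\ref{corollary:range}.
\item You rule out corners, since a doubly active point would be a vertex, which $d_i>r_c$ excludes.
\item You finish with a bounded-velocity continuation argument.
\end{itemize}
The paper's route buys brevity. Yours actually delivers the statement as written: a solution with values in $\mathcal{X}$, defined up to capture. The cost is heavier machinery and an explicit modeling choice for $f$ outside $\mathcal{X}$.

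One point needs fixing in the write-up. Your sentence ``the limit from $A_{jk}<0$ never enters'' is wrong, and the extension you then introduce is the essential step, not a refinement. The Filippov set discards the values of $f$ on null sets. If $f^{jk}_{\mathrm{edge}}$ were used only on $\mathcal{S}_{jk}$, you would get $\mathcal{F}[f](z)=\{f_{\mathrm{int}}(z)\}$ at boundary points. The tangency condition would then fail whenever the evader heads outward, because the pure-pursuit contribution to $T_p$ vanishes there, as you note. It is precisely because $f=f^{jk}_{\mathrm{edge}}$ on $\{A_{jk}<0\}$ that $\mathcal{F}[f](z)=\operatorname{co}\{f_{\mathrm{int}}(z),f^{jk}_{\mathrm{edge}}(z)\}$ contains an inward-pointing vector. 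The paper's proof passes over this same issue silently.

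Finally, your post-capture continuation (``frozen, or any bounded selection'') is a convention, not a Filippov solution of \eqref{eq:switch}. The paper has the same looseness: its bound $d_i\ge r_c$ holds only up to $t_c$, which Theorem~\ref{theorem:enc_capture} shows is finite.
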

\begin{proof}
We verify the two hypotheses of \cite[Proposition~3]{ref_cortes_cortes2008discontinuous}: measurability and local essential boundedness of $f$.
\noindent\textbf{(i) Measurability of $f$:}
When $\dot{e}$ is continuous, both $f_{\mathrm{int}}$ and $f^{jk}_{\mathrm{edge}}$ are individually continuous because each component is a smooth function of $z$ whenever $\|e-p_i\| > 0$ for all $i \in [n]$, which holds throughout pursuit since $\|e-p_i\| \geq r_c > 0$. However, $f$ in \eqref{eq:switch} is discontinuous on each switching surface $\mathcal{S}_{jk}:=\{z \in \mathbb{R}^{2(n+1)} : A_{jk}(z) = 0\}$, defined for $(j,k) \in \mathcal{N}_0$.
Hence, $f$ is continuous on $\mathcal{X} \setminus \mathcal{S}$, where $\mathcal{X} := \bigcap_{(j,k) \in \mathcal{N}_0} \{z : A_{jk}(z) \geq 0\}$ and $\mathcal{S} = \bigcup_{(j,k) \in \mathcal{N}_0} \mathcal{S}_{jk}$.
 Hence, $f$ is continuous on
$\mathcal{X} \setminus \mathcal{S}_{jk}$, where $\mathcal{X} := \{z : A_{jk}(z) \geq 0, \text{ for all consecutive } j, k \in [n]\}$.
Each switching surface $\mathcal{S}_{jk}=\{z \in \mathbb{R}^{2(n+1)} : A_{jk}(z) = 0\}$ is defined
by one smooth equation in $\mathbb{R}^{2(n+1)}$. Since
$\nabla_z A_{jk} \neq 0$ on $\mathcal{S}_{jk}$ whenever $p_j = p_k = e$ is not true (which holds throughout the pursuit since $r_c > 0$), \cite[Thm.~5-1]{ref_spivak2018calculus} guarantees that $\mathcal{S}_{jk}$ is a smooth, $2n+1$ dimensional hypersurface in $\mathbb{R}^{2(n+1)}$; hence, it has Lebesgue measure zero. Since there are $n$ such surfaces (one per each edge of the convex hull formed by $n$ pursuers), $\mathcal{S} = \bigcup_{(j,k) \in \mathcal{N}_0} \mathcal{S}_{jk}$ has Lebesgue measure zero because it is a finite union of measure-zero sets \cite{ref_real_royden2010real}. Thus, $f$ is continuous except on a set of Lebesgue measure zero. Therefore, it is measurable \cite{ref_real_royden2010real}.

\medskip

\noindent \noindent\textbf{(ii) Local essential boundedness of $f$:} Each pursuer moves at unit speed under both $f_{\mathrm{int}}$ and $f^{jk}_{\mathrm{edge}}$, since $\|f_i\| = 1$ for all $i \in [n] \setminus \{j,k\}$ and the matrices $R_{\varphi_l}$ for $l \in \{j,k\}$
are orthogonal ($R_{\varphi_l}^{\top}R_{\varphi_l} = I$), giving 
$\|f^{\mathrm{edge}}_l\|= 1$ for $l \in \{j,k\}$. Additionally, we have $\|\dot{e}\| \leq \mu_m$. This yields $\|f(z)\|^{2} \leq n + \mu_m^{2}$ uniformly over $\mathcal{X}$, so $f$ is locally essentially bounded. \\
Thus, both the conditions of \cite[Prop.~3]{ref_cortes_cortes2008discontinuous} are satisfied, and therefore, for every $z_0 \in \mathcal{X}$, there exists a Filippov solution of \eqref{eq:switch}.

Thus, both the conditions of \cite[Proposition~3]{ref_cortes_cortes2008discontinuous} are satisfied, and therefore, for every $z_0 \in \mathcal{X}$, there exists a Filippov solutions of \eqref{eq:switch}.
\end{proof}

We note that the continuity assumption on the evader’s strategy is made to facilitate the proof of existence of Filippov solutions. Relaxing this assumption remains open for future work. Nevertheless, numerical results, reported in the following section, show that the strategy proposed in Theorem 2 is effective against certain types of discontinuous evader strategies as well.

\begin{figure*}[t!]
      \centering
      \begin{subfigure}{0.24\textwidth}
       \centering 
         \includegraphics[width=\textwidth]{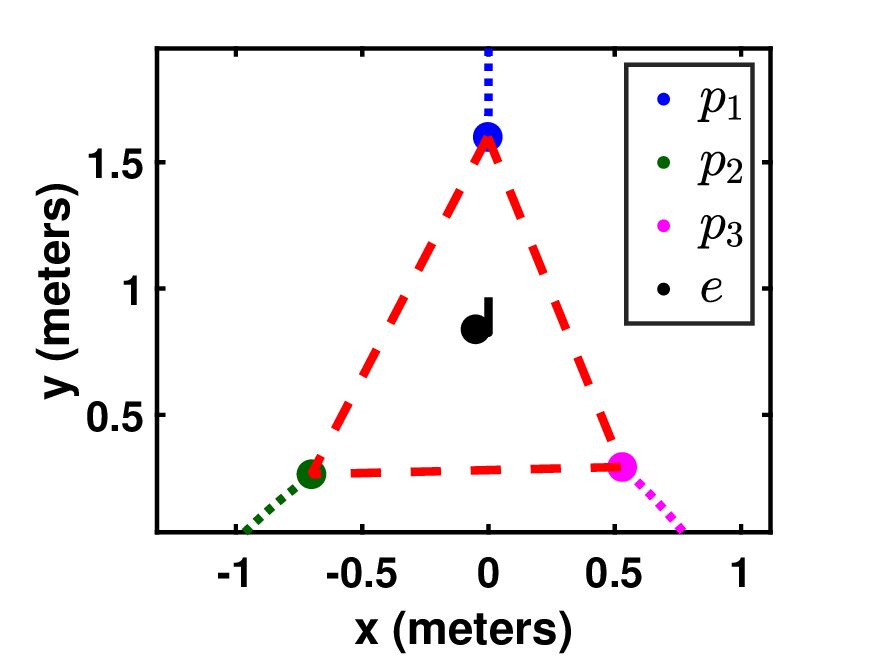}   
         \caption{$t=0.45 \, \text{s}$}
         \label{fig1:greedy_snap}
      \end{subfigure}
      \begin{subfigure}{0.24\textwidth}
       \centering 
     \includegraphics[width=\textwidth]{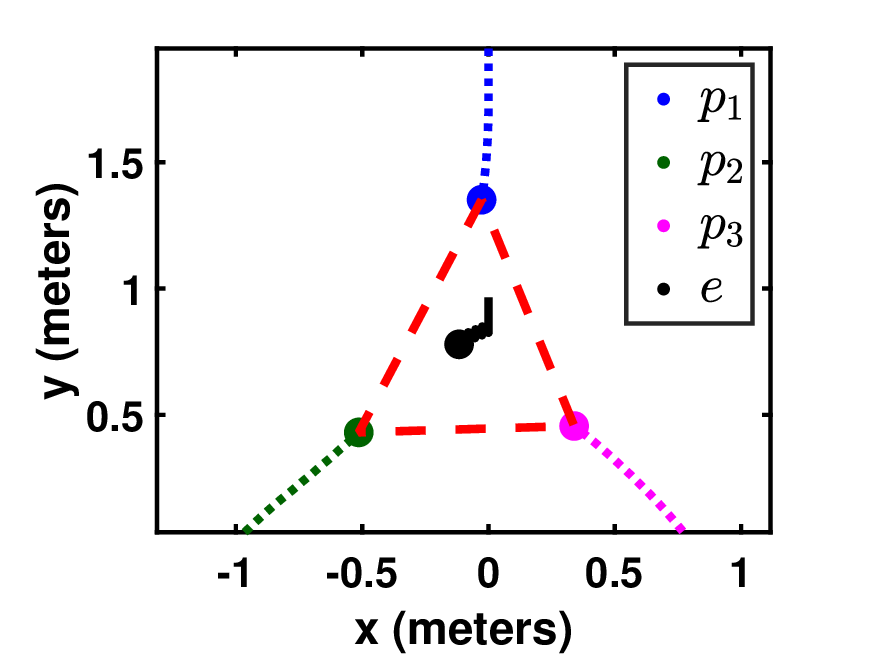}
     \caption{$t=0.7 \, \text{s}$}
      \label{fig2:greedy_snap}
      \end{subfigure}
    \begin{subfigure}{0.24\textwidth}
       \centering 
     \includegraphics[width=\textwidth]{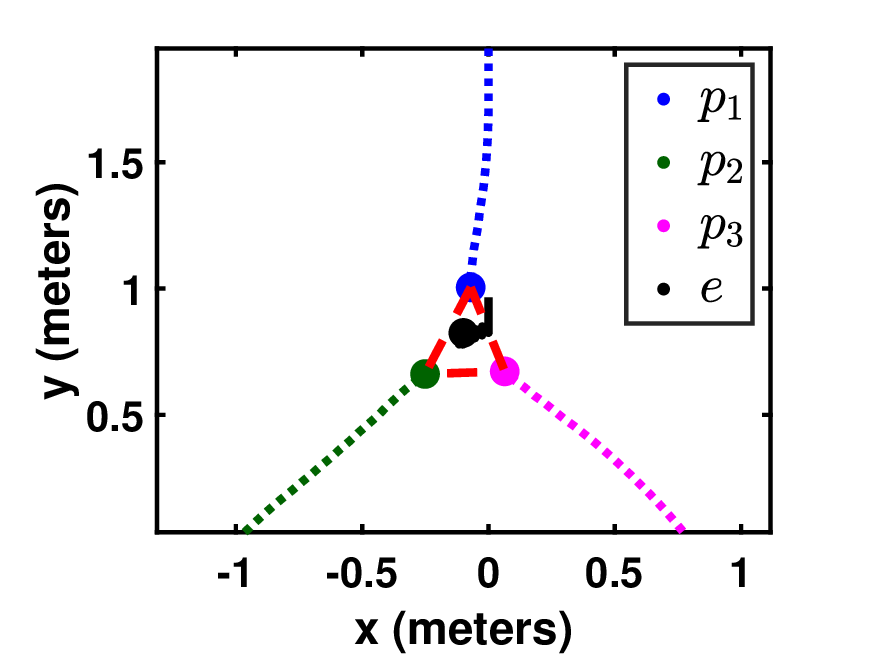}
     \caption{$t=1.05 \, \text{s}$}
      \label{fig3:greedy_snap}
      \end{subfigure}
       \begin{subfigure}{0.24\textwidth}
       \centering 
        \includegraphics[width=\textwidth]{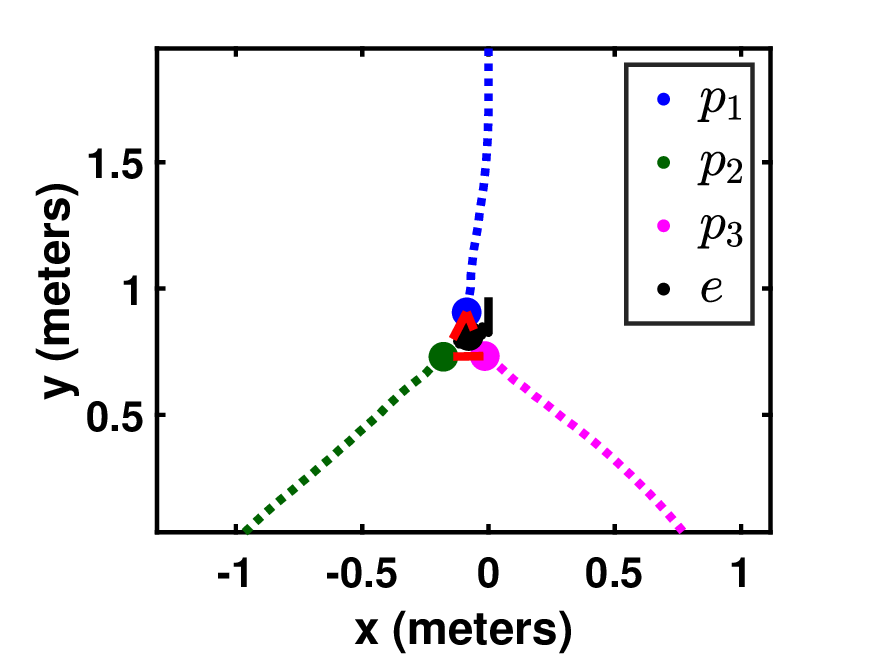}
         \caption{$t=1.25 \, \text{s}$}
          \label{fig4:greedy_snap}
          \end{subfigure}
       \centering
      \caption{Encirclement guaranteed capture under the {\bf greedy policy} of the evader with three pursuers. The evader moves away from the nearest pursuer at each instant throughout the pursuit. The evader is always encircled and is captured at $t=1.25 \, \text{s}$.}
      \label{fig:phase_greedy}
   \end{figure*}
\begin{figure*}[t!]
      \centering
      \begin{subfigure}{0.24\textwidth}
       \centering 
         \includegraphics[width=\textwidth]{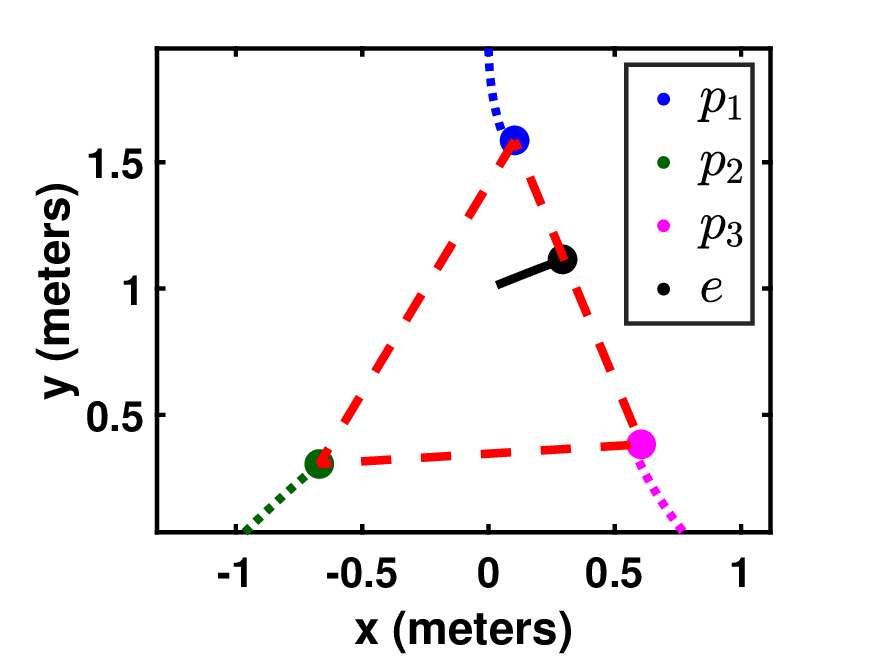}   
         \caption{$t=0.5 \, \text{s}$}
         \label{fig1:switch_snap}
      \end{subfigure}
    \begin{subfigure}{0.24\textwidth}
       \centering 
     \includegraphics[width=\textwidth]{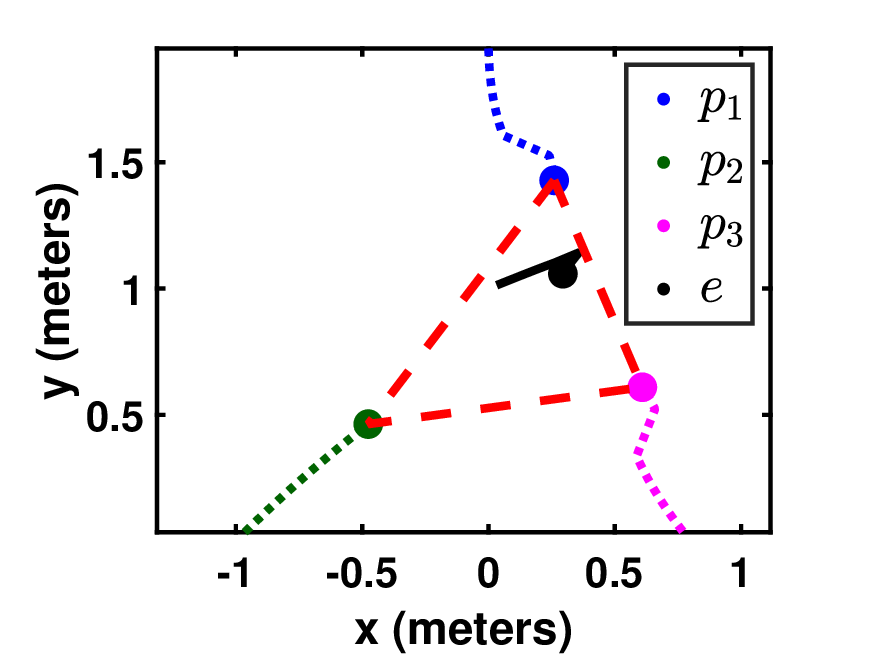}
     \caption{$t=0.75 \, \text{s}$}
      \label{fig2:switch_snap}
      \end{subfigure}
    \begin{subfigure}{0.24\textwidth}
       \centering 
        \includegraphics[width=\textwidth]{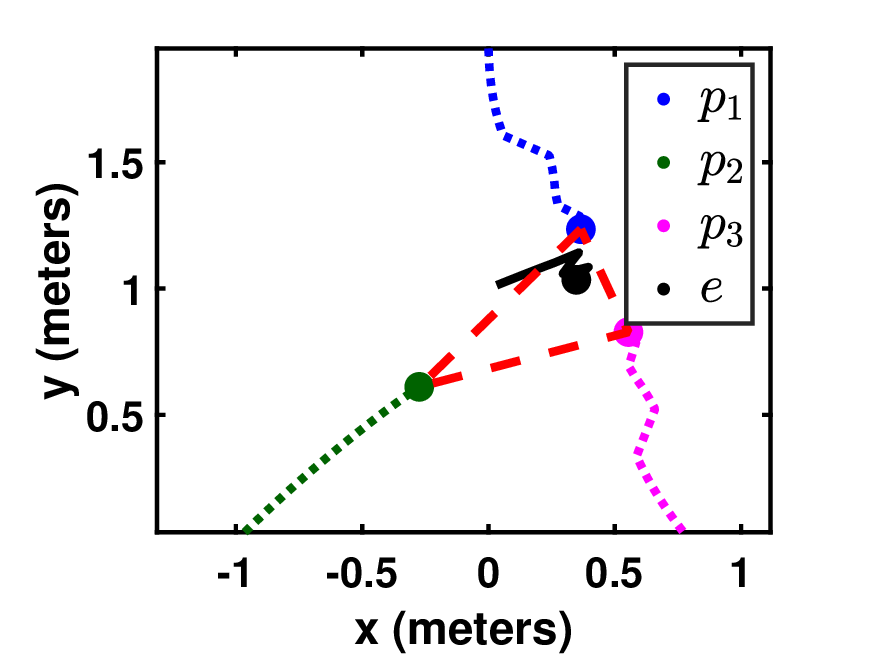}
         \caption{$t=1 \, \text{s}$}
          \label{fig3:switch_snap}
          \end{subfigure}
      \begin{subfigure}{0.24\textwidth}
       \centering 
        \includegraphics[width=\textwidth]{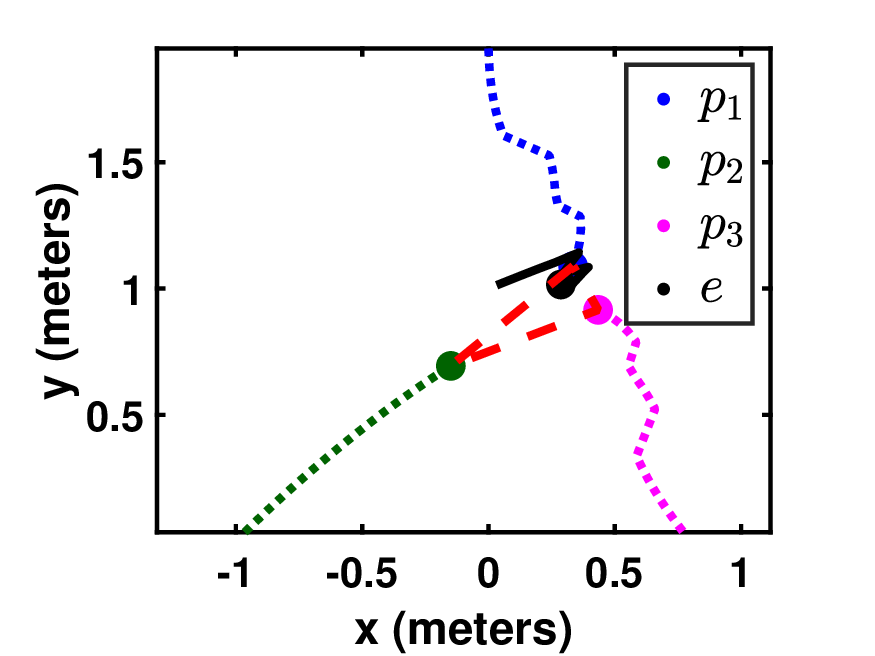}
         \caption{$t=1.2 \, \text{s}$}
          \label{fig4:switch_snap}
          \end{subfigure}
       \centering
      \caption{Encirclement guaranteed capture under the {\bf switching policy} of the evader with three pursuers. The evader moves to the closest edge by $0.5 \, \text{s}$ and switches between moving to the closest edge and moving towards the interior for every $0.15 \, \text{s}$. The evader is always encircled and is captured at $1.2 \, \text{s}$.}
      \label{fig:phase_switching}
   \end{figure*}
\begin{figure*}[t!]
      \centering
      \begin{subfigure}{0.24\textwidth}
       \centering 
         \includegraphics[width=\textwidth]{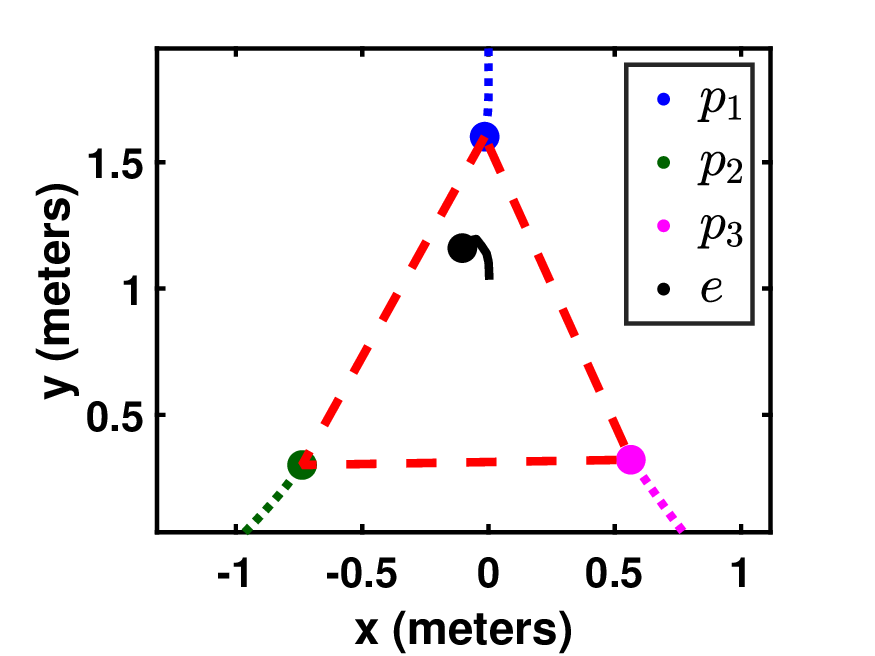}   
         \caption{$t=0.45 \, \text{s}$}
         \label{fig1:human_snap}
      \end{subfigure}
    \begin{subfigure}{0.24\textwidth}
       \centering 
     \includegraphics[width=\textwidth]{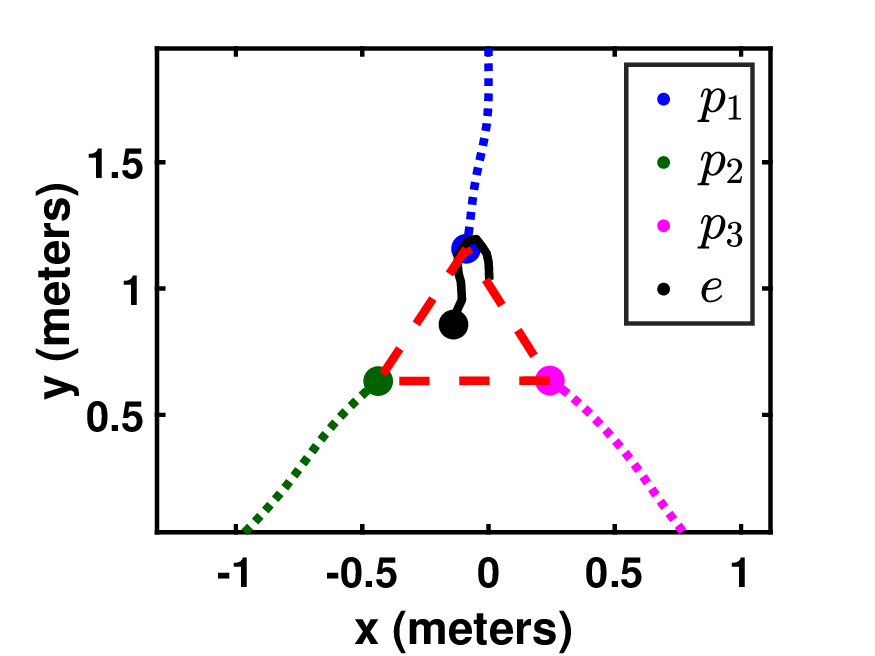}
     \caption{$t=0.9 \, \text{s}$}
      \label{fig2:human_snap}
      \end{subfigure}
    \begin{subfigure}{0.24\textwidth}
       \centering 
        \includegraphics[width=\textwidth]{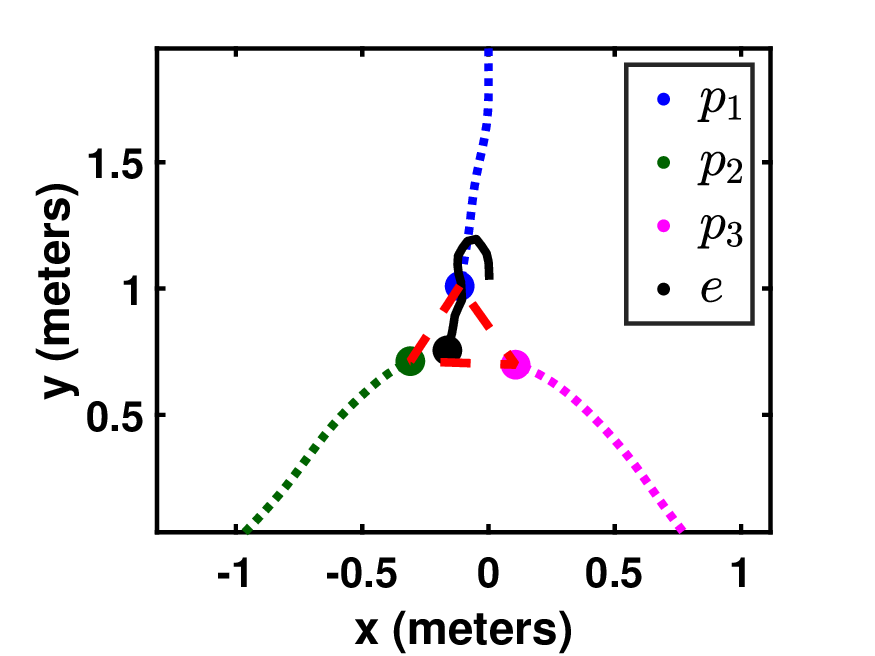}
         \caption{$t=1.05 \, \text{s}$}
          \label{fig3:human_snap}
          \end{subfigure}
      \begin{subfigure}{0.24\textwidth}
       \centering 
        \includegraphics[width=\textwidth]{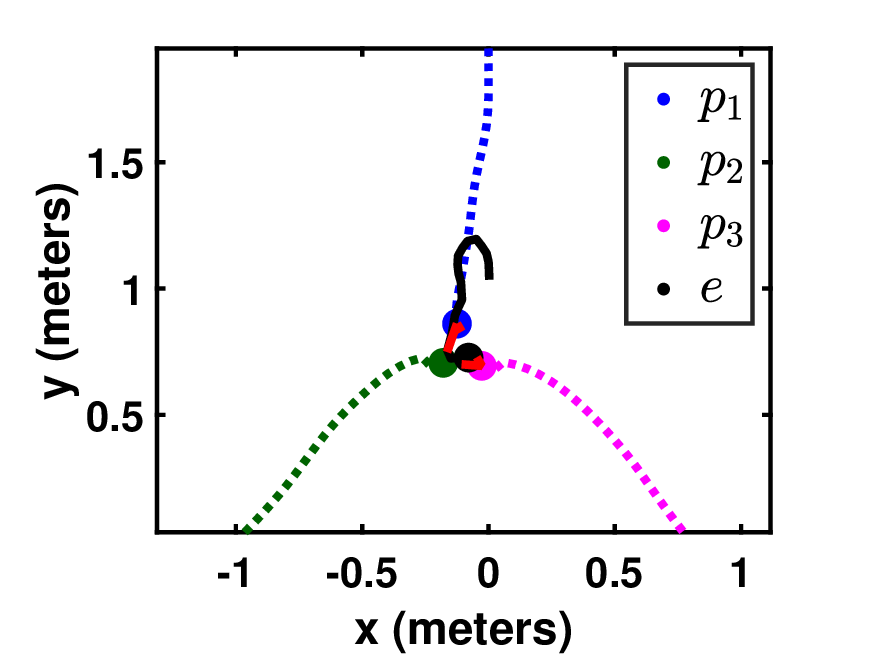}
         \caption{$t=1.19\, \text{s}$}
          \label{fig4:human_snap}
          \end{subfigure}
       \centering
      \caption{Encirclement guaranteed capture with three pursuers under {\bf random policies} adopted by a human-like evader. The evader remains encircled and is captured at $1.19 \, \text{s}$.}
      \label{fig:phase_human}
   \end{figure*}
\section{Numerical Results} \label{section:results}
We now demonstrate the effectiveness of the proposed strategies in guaranteeing encirclement and capture via numerical simulations. All simulations have been carried out in the MATLAB environment. The capture radius is taken as $r_c=0.3 \,\text{m}$. The pursuers' speed has been normalized to unity, and the evader is considered to move at its maximum speed $\mu_m<1$, which is set to be $0.7 \,\text{m/s}$. We consider three pursuers and one evader in $\mathbb{R}^2$. The pursuers' initial positions are chosen to be $(0,2)$, $(-1,0)$ and $(0.8,0)$, while the evader's initial position is chosen as $(0,1)$ which is inside the convex hull. All units are in meters. 

First, we consider the following commonly adopted worst-case evader strategies, while the pursuers follow the strategy presented in Theorem \ref{theorem:enc_capture} with $\varphi_j= \varphi_k= \sin^{-1}(\mu_m)$ for the active pursuers.
\begin{enumerate}
 \item \textbf{Greedy policy (GP)}: The evader reacts only to the nearest pursuer and moves away from it \cite{ref_slow_encirclement_wang2023distributed},\cite{ref_RMPC_patra20253d}.
 \item \textbf{Switching policy (SP)}: The evader first moves to the closest edge and then alternates between remaining on the edge and moving to the interior at a given frequency, which taken as $0.3 \, \text{s}$ ($0.15 \,\text{s}$ in each phase) until capture. 
 \item \textbf{Random policy (RP)}: We consider a human-like evader in this case, which may behave erratically and select unpredictable moving directions \cite{ref_RMPC_patra20253d,ref_rmpc_EGP_wang2021encirclement,ref_slow_encirclement_wang2023distributed}. We choose the evader's heading manually in an online manner to simulate this case. 
 \item \textbf{Stationary policy (STP)}: The evader remains stationary at its point of initialization ($\mu=0$). 
 \item \textbf{Closest Link policy (CLP)}: The evader moves towards its nearest edge at every instant and retains the edge phase until capture by choosing appropriate headings.
\end{enumerate}

\begin{figure*}[t!]
      \centering
      \begin{subfigure}{0.24\textwidth}
       \centering 
         \includegraphics[width=\textwidth]{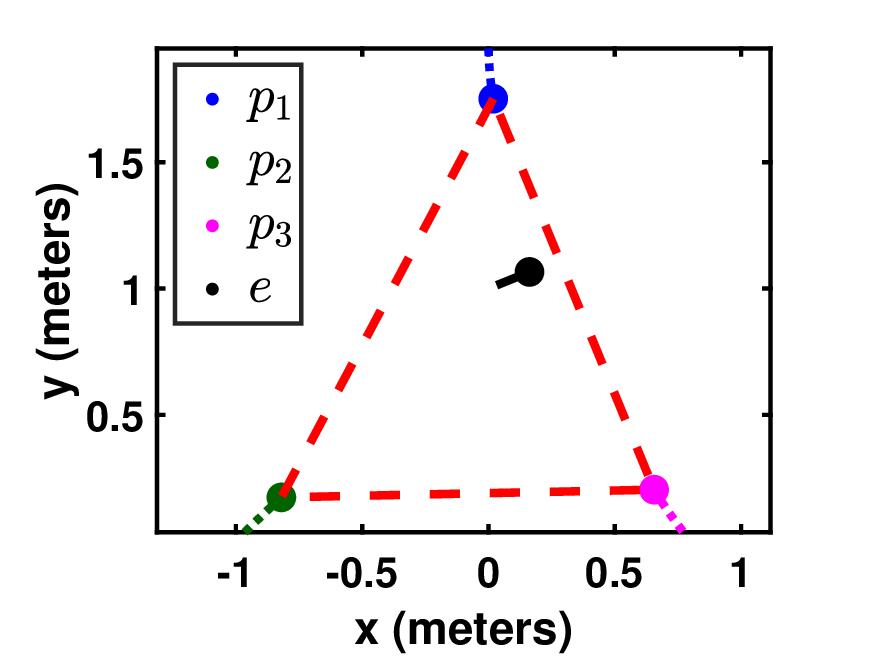}   
         \caption{$t=0.3 \, \text{s}$}
         \label{fig1:4P_snap}
      \end{subfigure}
      \begin{subfigure}{0.24\textwidth}
       \centering 
     \includegraphics[width=\textwidth]{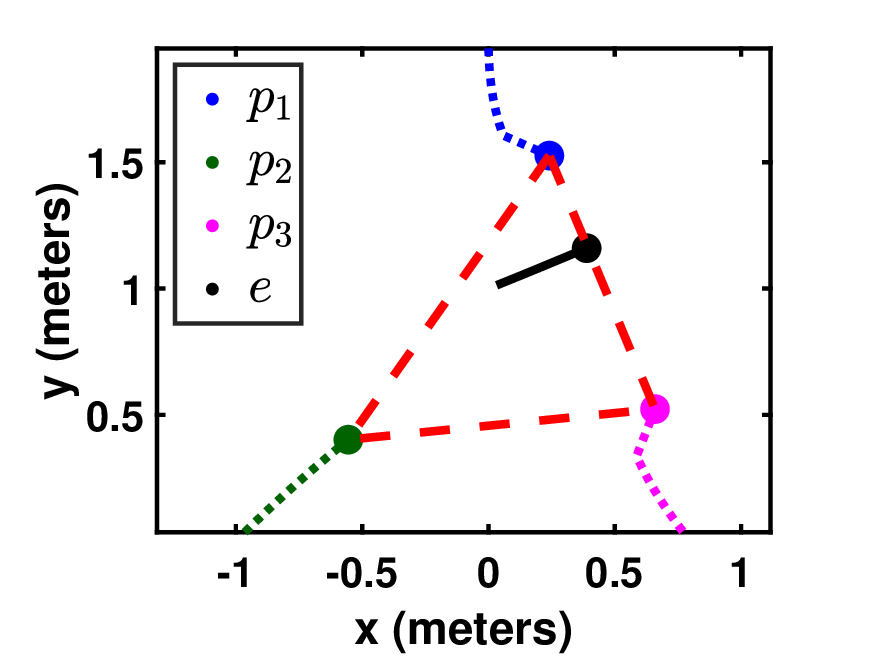}
     \caption{$t=0.65 \, \text{s}$}
      \label{fig2:4P_snap}
      \end{subfigure}
    \begin{subfigure}{0.24\textwidth}
       \centering 
     \includegraphics[width=\textwidth]{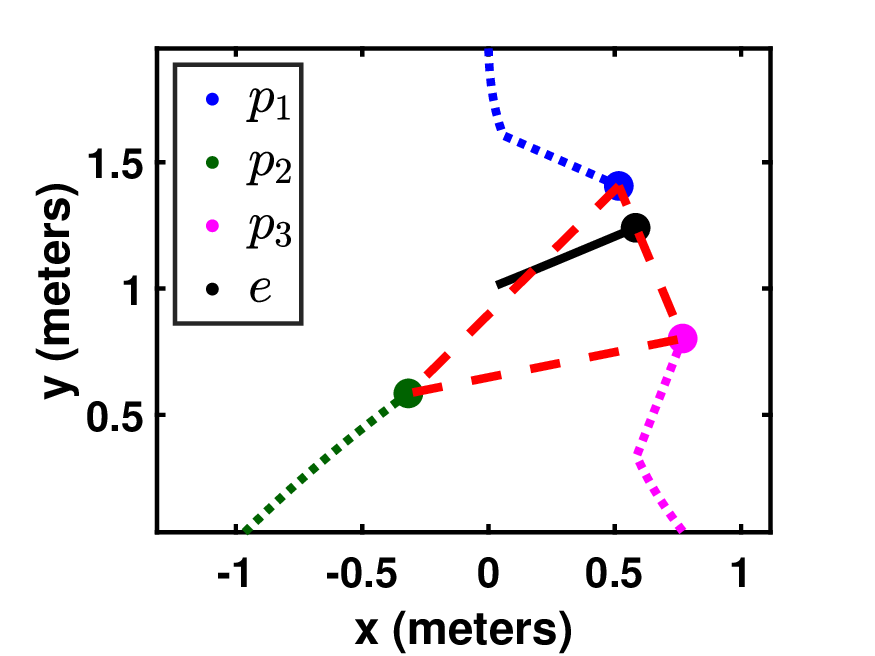}
     \caption{$t=0.95 \, \text{s}$}
      \label{fig3:4P_snap}
      \end{subfigure}
       \begin{subfigure}{0.24\textwidth}
       \centering 
        \includegraphics[width=\textwidth]{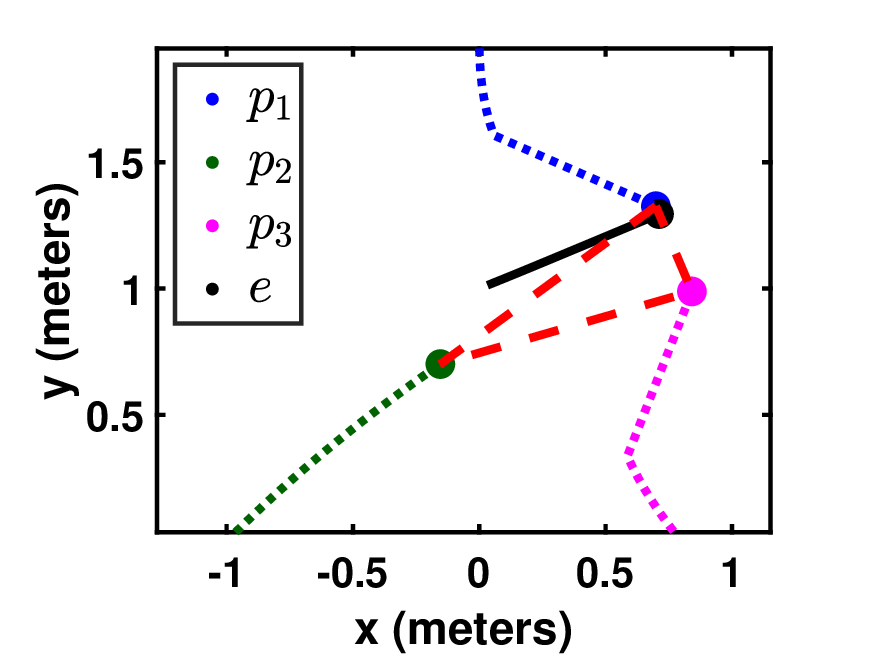}
         \caption{$t=1.15 \, \text{s}$}
          \label{fig4:4P_snap}
          \end{subfigure}
       \centering
      \caption{Encirclement guaranteed capture under the {\bf closest link policy} of the evader with three pursuers. The evader moves to the closest edge and remains on the edge until it is captured at $t=1.15 \, \text{s}$.}
      \label{fig:4PCL_phase}
   \end{figure*}
\begin{figure*}[t!]
      \centering
      \begin{subfigure}{0.24\textwidth}
       \centering 
         \includegraphics[width=\textwidth]{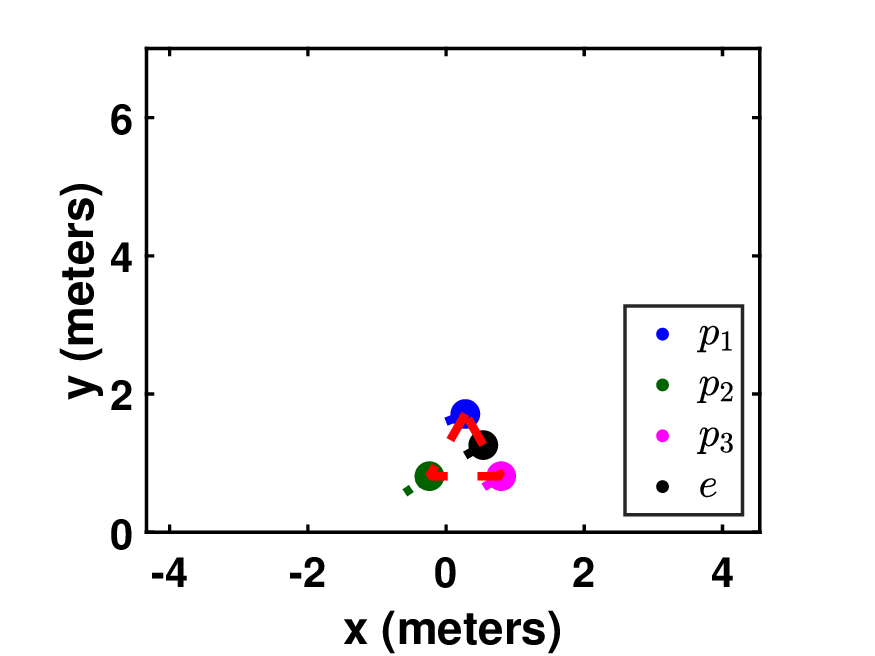}   
         \caption{$t=0.5 \, \text{s}$}
         \label{fig1:rmpc}
      \end{subfigure}
      \begin{subfigure}{0.24\textwidth}
       \centering 
     \includegraphics[width=\textwidth]{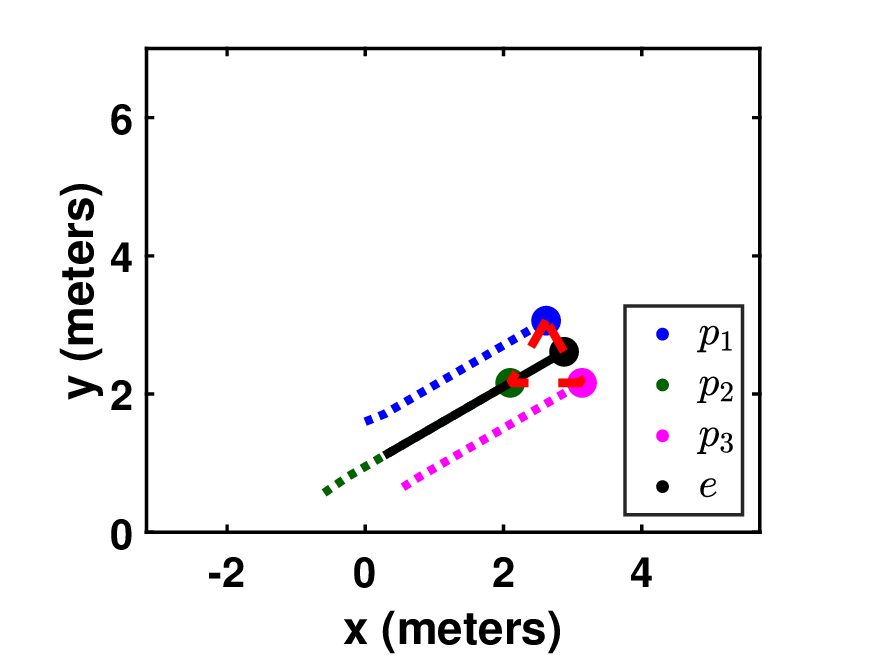}
     \caption{$t=12 \, \text{s}$}
      \label{fig2:rmpc}
      \end{subfigure}
    \begin{subfigure}{0.24\textwidth}
       \centering 
     \includegraphics[width=\textwidth]{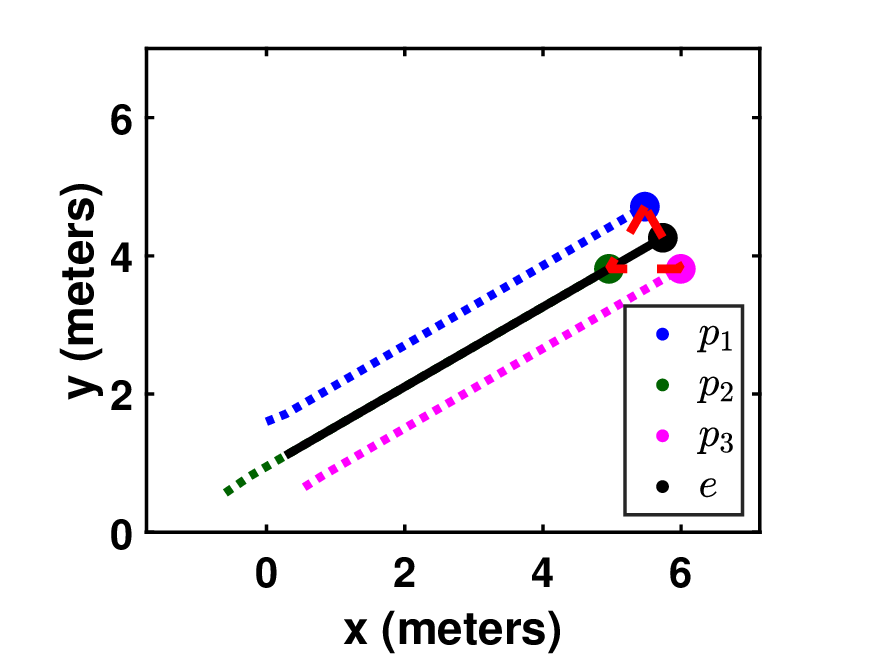}
     \caption{$t=25 \, \text{s}$}
      \label{fig3:rmpc}
      \end{subfigure}
       \begin{subfigure}{0.24\textwidth}
       \centering 
        \includegraphics[width=\textwidth]{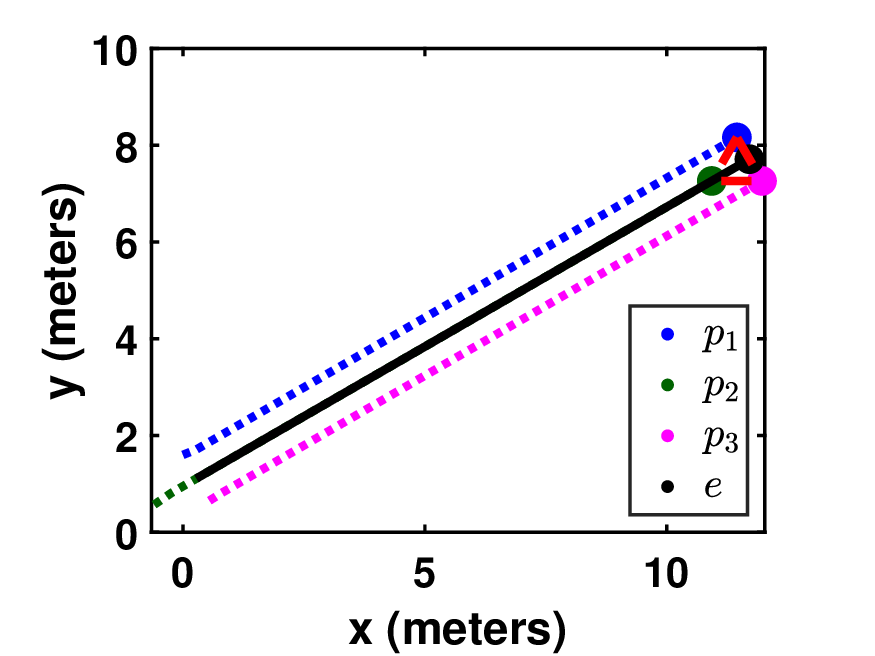}
         \caption{$t=42 \, \text{s}$}
          \label{fig4:rmpc}
          \end{subfigure}
       \centering
      \caption{Performance under the {\bf closest link policy} of the evader with the encirclement guaranteed sector set based robust MPC approach in \cite{ref_slow_RMPC_encirclement_wang2025distributed}. The evader moves to the closest edge and remains on the edge. While the pursuers are able to maintain encirclement, it takes around  $t=42 \, \text{s}$ for capture, due to the rigid sector constraints.}
      \label{fig:egss_rmpc}
   \end{figure*}
\begin{figure}
    \centering
    \includegraphics[width=0.8\linewidth]{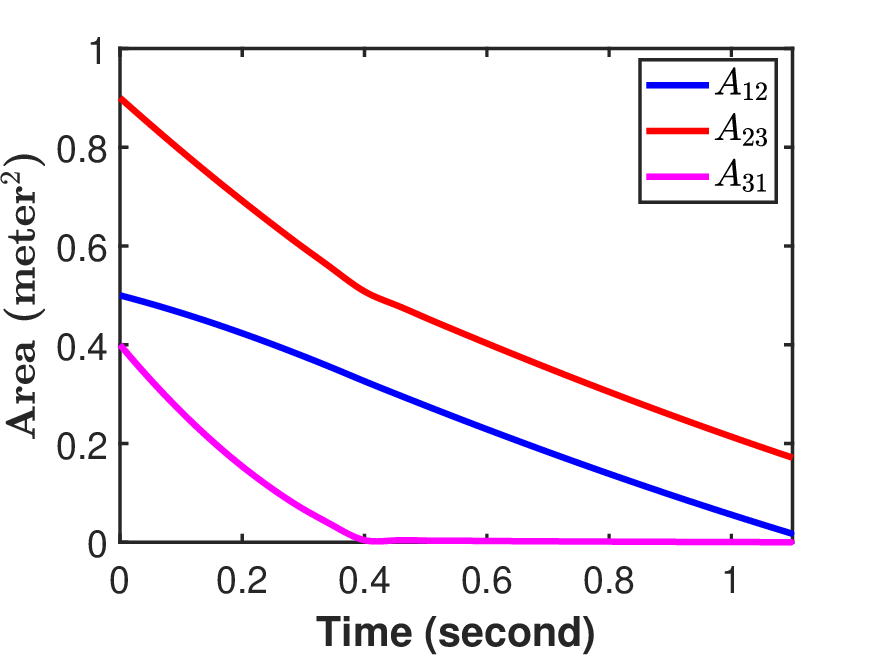}
    \caption{Evolution of the areas of the subtriangles $A_{12}$, $A_{23}$, and $A_{31}$ formed by the pursuer and evader positions during pursuit under the closest link policy of the evader. The areas remain non-negative at all times, confirming that encirclement is maintained. Furthermore, $A_{31}=0$ from approximately $0.4 \, \text{s}$ onward, indicating that the edge $p_3p_1$ remains active for the rest of the pursuit.}
    \label{fig:area_comparison}
\end{figure}
\begin{figure}
    \centering
    \includegraphics[width=0.8\linewidth]{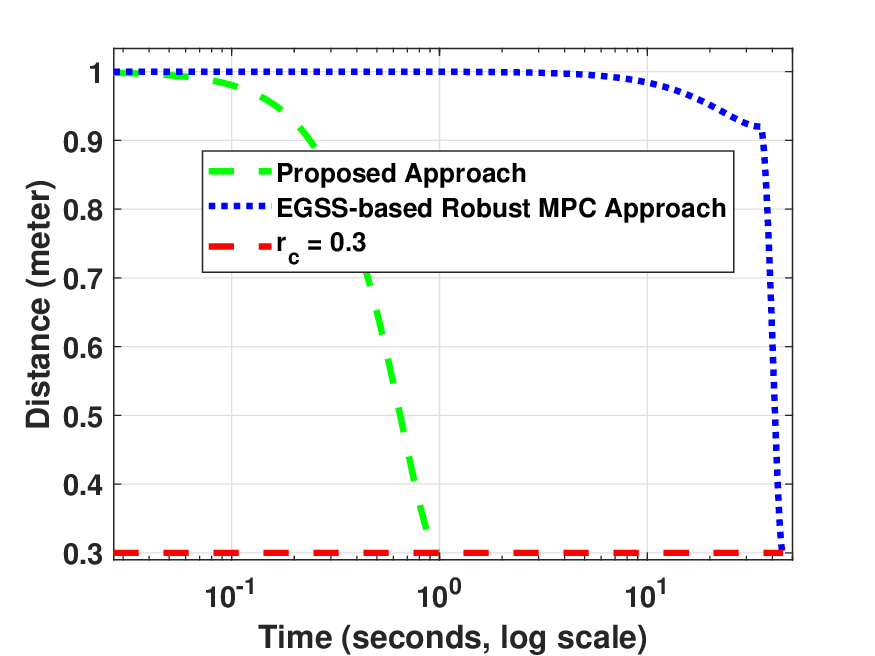}
    \caption{Distance between the nearest pursuer and the evader over time under the proposed strategy and the EGSS-based robust MPC approach in \cite{ref_slow_RMPC_encirclement_wang2025distributed}. 
    The proposed approach achieves capture much faster compared to the approach in \cite{ref_slow_RMPC_encirclement_wang2025distributed}.}
    \label{fig:dist_comparison}
\end{figure}
\begin{figure*}[t!]
      \centering
      \begin{subfigure}{0.45\textwidth}
       \centering 
         \includegraphics[width=\textwidth]{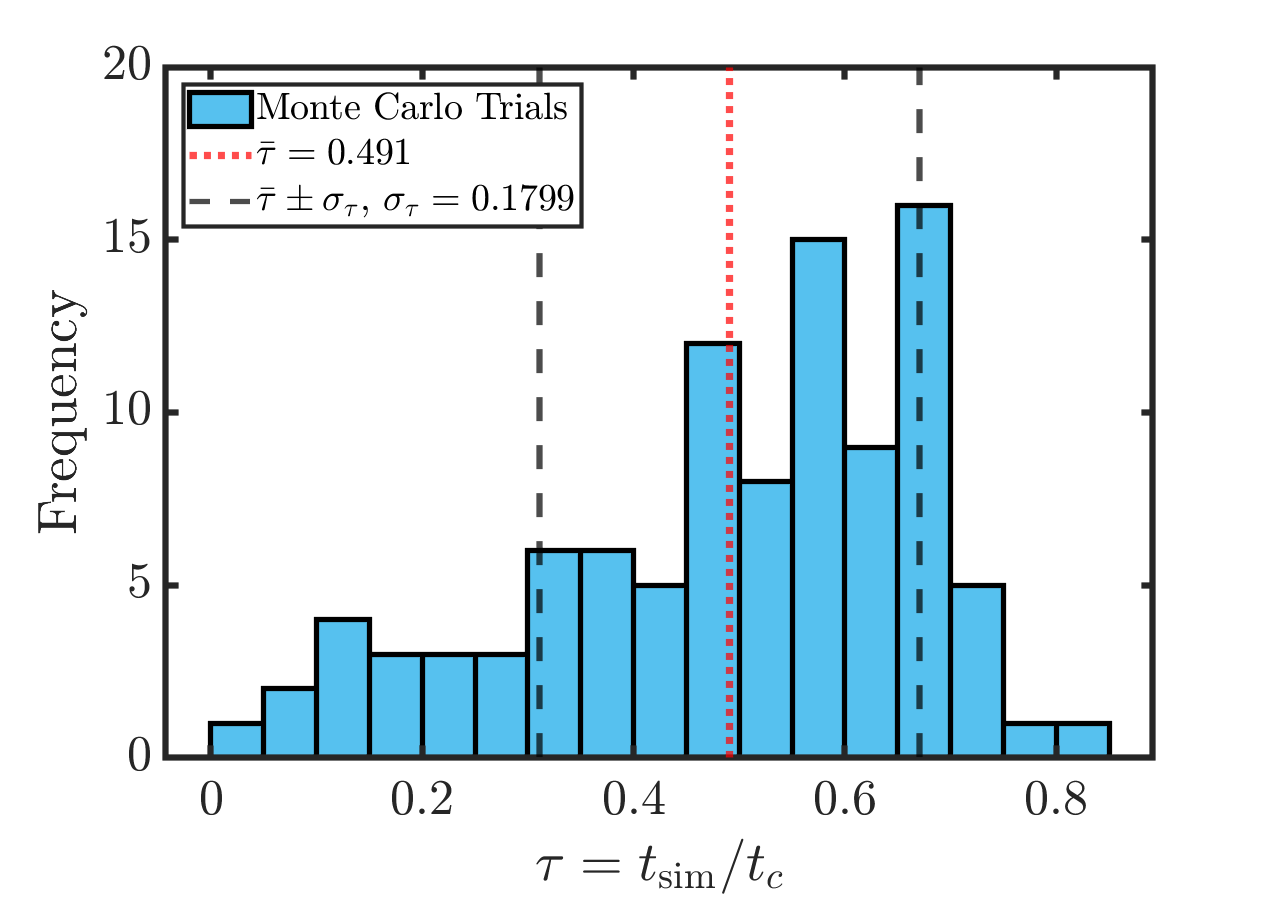}   
         \caption{$\mu_m=0.2$}
         \label{fig1:mu_0.2}
      \end{subfigure}
      ~~~~
      \begin{subfigure}{0.45\textwidth}
       \centering 
     \includegraphics[width=\textwidth]{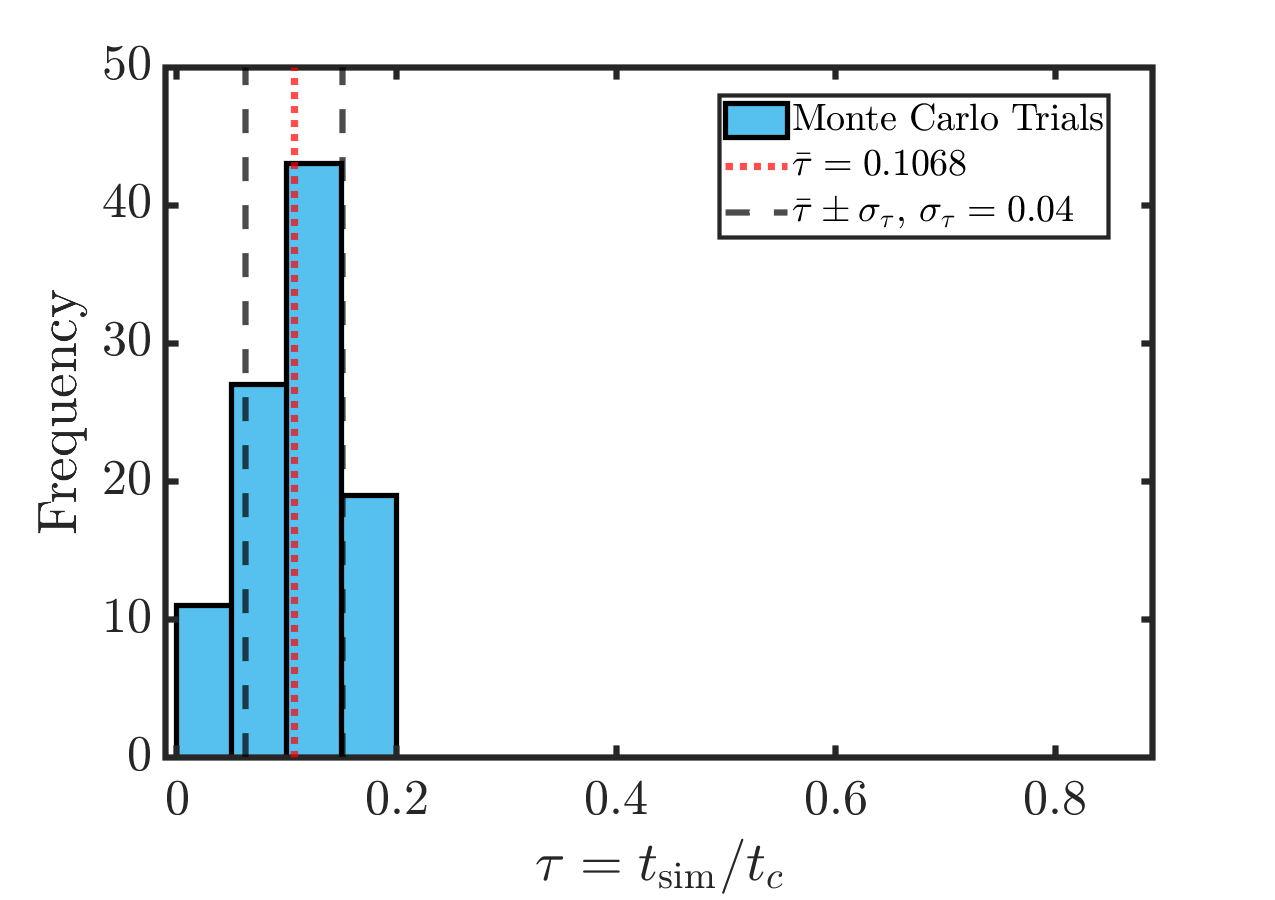}
     \caption{$\mu_m=0.9$}
      \label{fig2:mu_0.9}
      \end{subfigure}
    \centering
    \caption{Empirical distribution, mean and standard deviation of the ratio ($\tau$) of the numerical capture time ($t_{\text{sim}}$) and theoretical upper bound ($t_c$) under the closest link policy of the evader with pursuers initialized at $(0,2)$, $(-1,0)$, and $(0.8,0)$, while the evader's initial position is drawn uniformly at random from the interior of the triangle formed by the pursuers. It is found that $\tau<1$ ($t_{\text{sim}} < t_c$) in each case and the bound is more conservative for $\mu_m=0.9$ compared to $\mu_m=0.2$ in line with the theoretical findings.}
    \label{fig:capture_monte_carlo}
   \end{figure*}

Figures \ref{fig:phase_greedy} -- \ref{fig:4PCL_phase} present a few snapshots of the evolution of the pursuers' and evader's positions when the evader follows the GP, SP, RP, and CLP respectively. It is evident that the pursuers successfully maintain the evader in the convex hull and finally achieve capture in each case. The trajectories of the evader in Figure \ref{fig:phase_greedy} indicate that it is moving away from the nearest pursuer at all times. The trails of the pursuers' and the evader's trajectories in Figure \ref{fig:phase_switching} indicate that the evader is switching frequently between the nearest edge and the interior and thus making the active pursuers switch their strategies between the interior phase and the edge phase. The trajectories shown in Figure \ref{fig:phase_human} in the case of RP depend completely on the evader's input heading. It is clear from Figure \ref{fig:4PCL_phase} that under the CLP, the evader moves towards the closest edge, $p_3p_1$ and the edge gets activated by $0.4$ seconds. The evader then retains the edge phase until it is captured at $1.15$ seconds. The evolution of the areas $A_{12}$, $A_{23}$, and $A_{31}$ shown in Figure \ref{fig:area_comparison} confirms that encirclement is maintained, as the areas remain non-negative throughout the pursuit. Furthermore, $A_{31}=0$ from approximately $0.4 \, \text{seconds}$ onward, indicating that the edge $p_3p_1$ remains active for the remainder of the pursuit.

The time taken by the pursuers to capture the evader under various policies is reported in Table~\ref{tab:capture_times} against their respective upper bounds established in \eqref{eq:common_bound}. Under the stationary policy (STP), the bound is obtained by assuming $\mu_m=0$, which yields the smallest possible upper bound.\footnote{Note from \eqref{eq:common_bound} that the bound is an increasing function of $\mu_m$. Thus, if $\mu=0$ were known in advance, that would yield the smallest possible upper bound.} For all other policies the bound is obtained by setting $\mu_m=0.7$. It is seen that the interaction terminates in each case with a win for the pursuers, as the evader ends up in the capture disc of one of the pursuers. Thus, the proposed approach simultaneously guarantees encirclement and finite-time capture. 

Next, we compare our proposed approach with the encirclement guaranteed sector set (EGSS)-based robust MPC approach proposed in \cite{ref_slow_RMPC_encirclement_wang2025distributed} under the closest link evasion policy. It is seen from Figures \ref{fig:egss_rmpc} and \ref{fig:dist_comparison} that the EGSS-based robust MPC approach takes approximately $42$ seconds to achieve capture, whereas our proposed approach achieved capture in approximately $1.15$ seconds. Additionally, we conduct Monte Carlo simulations over 100 trials, where the evader’s initial position in each trial is randomly generated inside the convex hull of three pursuers to evaluate the performance statistically. For a fair comparison, the same set of randomly generated initial conditions is used for both the proposed method and the EGSS-based MPC approach. All simulations are performed under the closest link policy of the evader. The mean and standard deviation of the time to capture under the proposed method were found to be $1.04$ seconds and $0.12$ seconds, while analogous quantities were $40.34$ seconds and $2.32$ seconds under the EGSS-based robust MPC approach. This behavior can be attributed to the sector constraints, which may limit the pursuers’ ability to move directly toward the evader during the edge phase, thereby increasing the capture time.

\begin{table}[t!]
\centering
\caption{Comparison of Theoretical Bounds and Numerical Capture Times (in seconds). The bound for the stationary policy is obtained by assuming $\mu_m=0$ (which results in the tightest bound), while for all other cases, $\mu_m=0.7$.}
\label{tab:capture_times}
\begin{tabular}{|l|c|c|}
\hline
\textbf{Policy} & \textbf{Theoretical Bound} & \textbf{Numerical Capture Time} \\ \hline
Greedy & $3.10$ & $1.25$ \\
Switching & $3.10$ & $1.2$ \\
Random & $3.10$ & $1.19$ \\
Stationary & $0.93$ & $0.5$ \\
Closest Link & $3.10$ & $1.15$ \\
\hline
\end{tabular}
\end{table}

Furthermore, we numerically validate the upper bound on capture time by performing Monte Carlo simulations with $100$ trials under the closest link policy of the evader. In each trial, the pursuers are initialized at their fixed positions $(0,2)$, $(-1,0)$, and $(0.8,0)$, while the evader's initial position is drawn uniformly at random from the interior of the triangle formed by the pursuers. For each trial, we compute (i) the actual capture time $t_{\text{sim}}$ by numerically simulating the dynamics under the proposed strategies, and (ii) the theoretical upper bound given by $t_c = \frac{V_0-nr_c}{n(1-\mu_m)}$, where $V_0$ is the value of the Lyapunov function at $t=0$. The empirical distributions or histograms of the ratio $\tau = \frac{t_{\text{sim}}}{t_c}$, are shown in Figure \ref{fig:capture_monte_carlo} for $\mu_m = 0.2$ (left panel) and $\mu_m = 0.9$ (right panel). Figure \ref{fig:capture_monte_carlo} shows that $\tau < 1$ is always satisfied, i.e., the theoretical upper bound is larger than the actual capture time. Furthermore, we observe that for $\mu_m=0.9$, the bound is more conservative compared to $\mu_m=0.2$; this outcome is due to the presence of $1-\mu_m$ term in the denominator of $t_c$ which leads to $t_c$ becoming large when $\mu_m$ approaches $1$.

\section{Conclusion} \label{section:conclusion}
In this work, we presented a strategy for a group of pursuers to enclose an evader within their convex hull and capture it in finite time, without knowledge of the control strategy of the evader. Furthermore, we derived an upper bound on the time of capture that holds for all admissible evader strategies. The proposed framework assumes that the evader is initially encircled by the pursuers. Intuitively, this can potentially be achieved by exploiting the pursuers’ speed advantage to create a sufficiently large convex hull around the evader while avoiding capture; a formal analysis of this initialization phase remains an interesting direction for future research. The assumption $\mu_m<1$ is essential for the finite-time capture guarantee established in Theorem~$2$, and extending the analysis to equal-speed or faster evader scenarios with sufficiently many pursuers having capture discs constitutes an interesting open problem. Furthermore, extending the present setting to incorporate the presence of noise, time delay, and collision avoidance constraints among the pursuers remains as an important direction for future research.

\section*{Acknowledgement}
The authors thank Dr. Alexander Von Moll and Dr. Isaac Weintraub for helpful discussions.

\bibliographystyle{IEEEtran}
\bibliography{egc_dp}

\end{document}